\newcommand{\nedge}{k}
\newcommand{\bt}{\mathbf{t}}
\newcommand{\hide}[1]{}
\newcommand{\xhdr}[1]{\vspace{1.7mm}\noindent{{\bf #1.}}}
\newcommand{\eg}{\emph{e.g.}}
\newcommand{\ie}{\emph{i.e.}}
\newtheorem{proposition}{Proposition}
\newtheorem{theorem}{Theorem}
\newtheorem{problem}{Problem}
\newcommand{\UnnumberedFootnote}[1]{{\def\thefootnote{}\footnote{#1}
\addtocounter{footnote}{-1}}}
\newcommand{\netinf}{{\textsc{NetInf}}\xspace}
\newcommand{\netrate}{{\textsc{NetRate}}\xspace}
\newcommand{\connie}{{\textsc{CoNNIe}}\xspace}
\DeclareMathOperator*{\argmax}{argmax}
\newcommand{\BibTeX}{{\rm B\kern-.05em{\sc i\kern-.025em b}\kern-.08em
    T\kern-.1667em\lower.7ex\hbox{E}\kern-.125emX}}
\title{Inferring Networks of Diffusion and Influence}
\author{MANUEL GOMEZ-RODRIGUEZ\\Stanford University and MPI for Intelligent Systems\\
JURE LESKOVEC\\Stanford University\\
ANDREAS KRAUSE\\California Institute of Technology}
\begin{abstract}
Information diffusion and virus propagation are fundamental processes taking
place in networks. While it is often possible to directly ob\-ser\-ve when
nodes become infected with a virus or adopt the information, observing
in\-di\-vi\-dual transmissions (\ie, who infects whom, or who influences whom)
is typically very difficult. Furthermore, in many applications, the underlying
network over which the diffusions and propagations spread is actually {\em
unobserved}. We tackle these challenges by developing a method for tracing
paths of diffusion and influence through networks and inferring the networks
over which contagions propagate. Given the times when nodes adopt pieces of
information or become infected, we identify the optimal network that best
explains the observed infection times. Since the optimization problem is
NP-hard to solve exactly, we develop an efficient approximation algorithm that
scales to large datasets and finds pro\-va\-bly near-optimal networks.

We demonstrate the effectiveness of our approach by tracing information
diffusion in a set of 170 million blogs and news articles over a one year
period to infer how information flows through the online media space. We find
that the diffusion network of news for the top 1,000 media sites and blogs tends to 
have a core-periphery structure with a small set of core media sites that diffuse 
information to the rest of the Web. These sites tend to have stable circles of influence 
with more general news media sites acting as connectors between them.

\end{abstract}
\keywords{Networks of diffusion, Information cascades, Blogs, News media, Meme-tracking, Social networks}
\begin{document}

\maketitle

\UnnumberedFootnote{Preliminary version of this work appeared in proceedings of
the 16th ACM SIGKDD International Conference on Knowledge Discovery and Data
Mining (KDD '10), 2010. Algorithm implementation and the data are available at
\url{http://snap.stanford.edu/netinf/}}

\section{Introduction}
\label{sec:intro}
The dissemination of information, cascading behavior, diffusion and spreading
of ideas, innovation, information, influence, viruses and diseases are
ubiquitous in social and information networks.
%
Such processes play a fundamental role in settings that include the spread of
technological innovations~\cite{rogers95diffusion,strang98diffusion}, word of
mouth effects in
marketing~\cite{domingos01mining,kempe03maximizing,jure06viral}, the spread of
news and
opinions~\cite{adar04blogspace,gruhl2004information,jure07cascades,leskovec2009kdd,nowell08letter},
collective problem-solving~\cite{kearns06experimental}, the spread of
infectious
diseases~\cite{anderson92infectious,bailey75mathematical,hethcote00diseases}
and sampling methods for hidden
populations~\cite{goodman61sampling,heckathorn97sampling}.

In order to study network diffusion there are two fundamental challenges one
has to address. First, to be able to track cascading processes taking place in
a network, one needs to identify the {\em contagion} (\ie, the idea,
information, virus, disease) that is actually spreading and propagating over
the edges of the network. Moreover, one has then to identify a way to
successfully trace the contagion as it is diffusing through the network. For
example, when tracing information diffusion, it is a non-trivial task to
automatically and on a large scale identify the phrases or ``memes'' that are
spreading through the Web~\cite{leskovec2009kdd}.

Second, we usually think of diffusion as a process that takes place on a {\em
network}, where the contagion propagates over the edges of the underlying
network from node to node like an epidemic. However, the network over which
propagations take place is usually {\em unknown} and {\em unobserved}.
Commonly, we only observe the times when particular nodes get ``infected'' but
we {\em do not} observe {\em who} infected them. In case of information
propagation, as bloggers discover new information, they write about it without
explicitly citing the source. Thus, we only observe the time when a blog gets
``infected'' with information but not where it got infected from. Similarly, in
virus propagation, we observe people getting sick without usually knowing who
infected them. And, in a viral marketing setting, we observe people purchasing
products or adopting particular behaviors without explicitly knowing who was
the influencer that caused the adoption or the purchase.

These challenges are especially pronounced in information diffusion on the Web,
where there have been relatively few large scale studies of information
propagation in large
networks~\cite{adar05epidemics,jure06influence,jure07cascades,nowell08letter}.
In order to study paths of diffusion over networks, one essentially requires to
have complete information about who influences whom, as a single missing link
in a sequence of propagations can lead to wrong
inferences~\cite{sadikov11cascades}. Even if one collects near complete large
scale diffusion data, it is a non-trivial task to identify textual fragments
that propagate relatively intact through the Web without human supervision. And
even then the question of how information diffuses through the network still
remains. Thus, the questions are, what is the network over which the information
propagates on the Web? What is the global structure of such a network? How do
news media sites and blogs interact? Which roles do different sites play in the
diffusion process and how influential are they?

\xhdr{Our approach to inferring networks of diffusion and influence}
We address the above questions by positing that there is some underlying
unknown network over which information, viruses or influence propagate. We
assume that the underlying network is static and does not change over time. We
then observe the times when nodes get infected by or decide to adopt a
particular contagion (a particular piece of information, product or a virus)
but we do not observe where they got infected from. Thus, for each contagion,
we only observe times when nodes got infected, and we are then interested in
determining the paths the diffusion took through the unobserved network. Our
goal is to reconstruct the network over which contagions propagate.
Figure~\ref{fig:FFN20} gives an example.

\begin{figure*}[h]
  \centering
  \subfigure[True network $G^*$]{\includegraphics[width=0.6\textwidth]{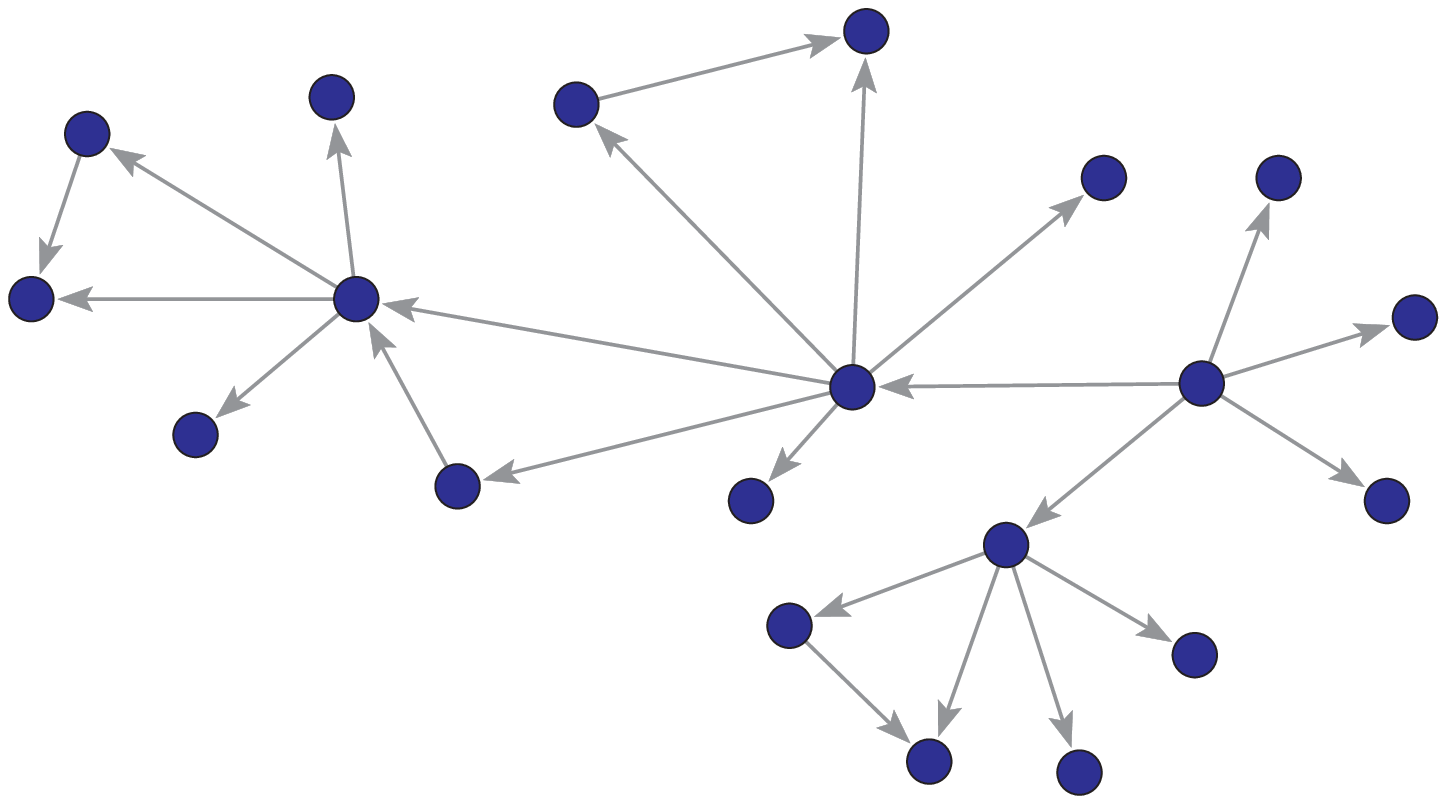}
    \label{fig:FFN20GroundTruth}} \\
  \subfigure[Inferred network $\hat{G}$ using heuristic baseline method]{\includegraphics[width=0.6\textwidth]{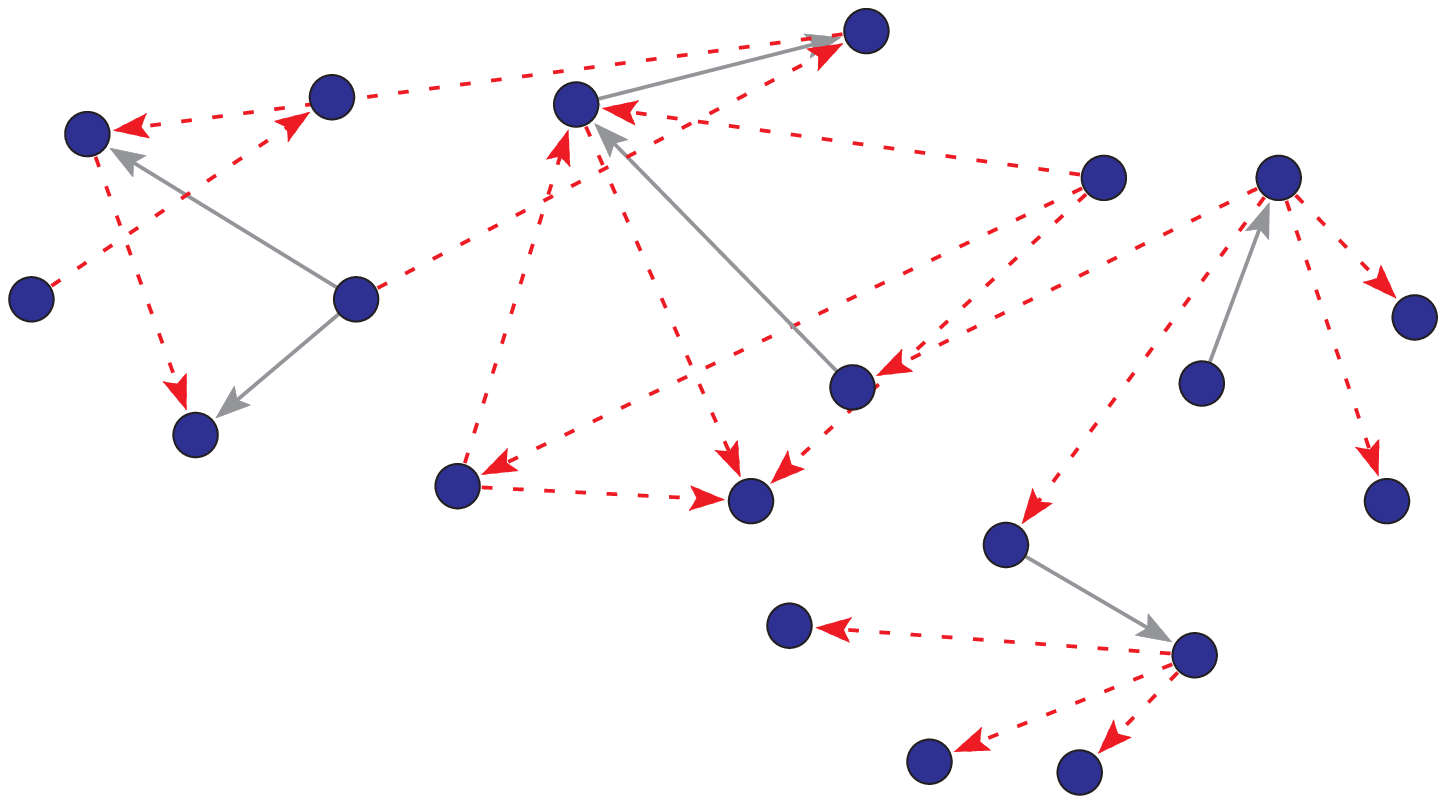}
    \label{fig:FFN20Baseline}} \\
  \subfigure[Inferred network $\hat{G}$ using \netinf algorithm]{\includegraphics[width=0.6\textwidth]{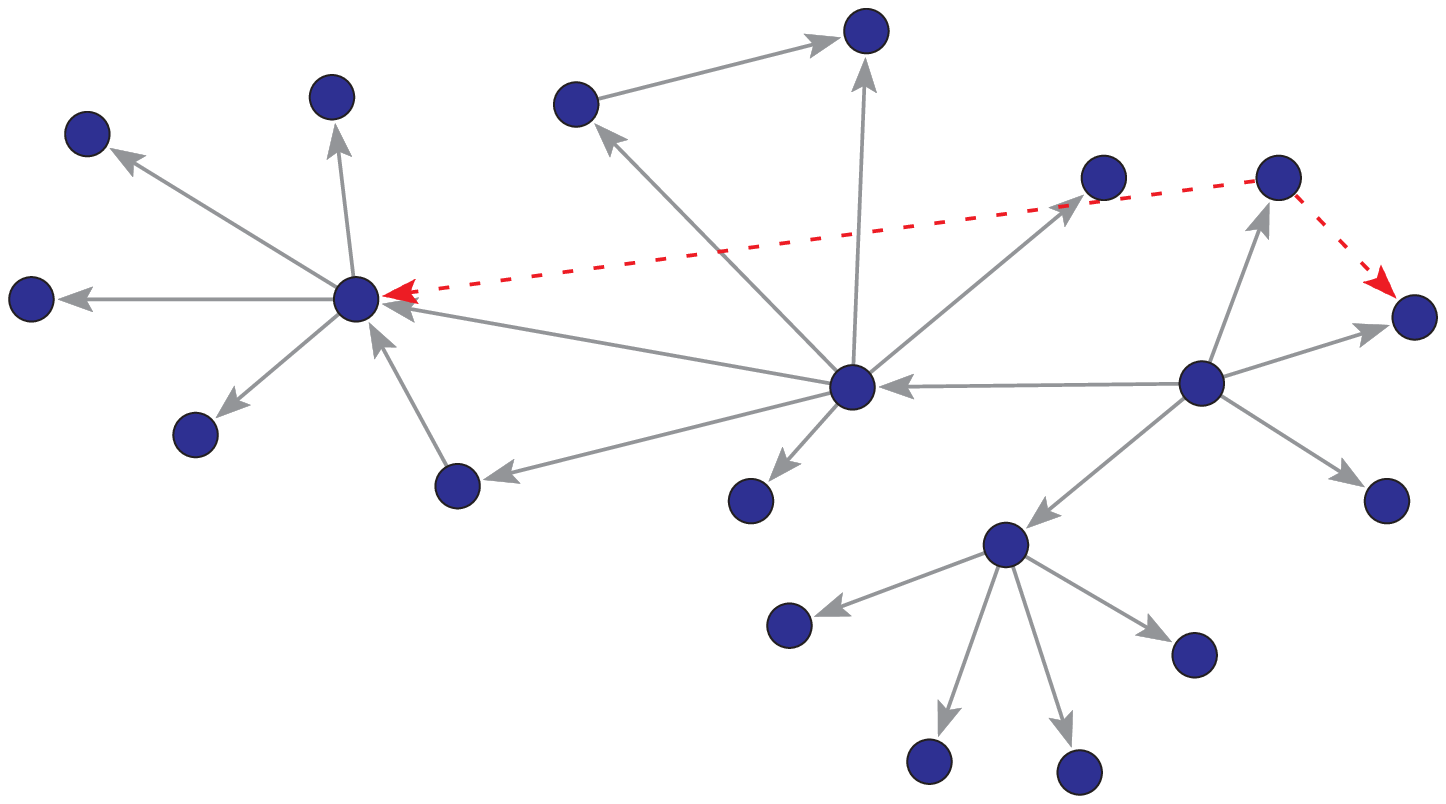}
    \label{fig:FFN20Estimated}}
  \caption{{\em Diffusion network inference problem.} There
  is an unknown network (a) over which contagions propagate. We are
  given a collection of node infection times and aim to recover the network in figure (a).
  Using a baseline heuristic (see Section~\ref{sec:evaluation}) we recover network
  (b) and using the proposed \netinf algorithm we recover network (c). Red edges denote mistakes. 
  The baseline makes many mistakes but \netinf almost perfectly recovers the network. }
  \label{fig:FFN20}
\end{figure*}

Edges in such networks of influence and diffusion have various interpretations.
In virus or disease propagation, edges can be interpreted as who-infects-whom.
In information propagation, edges are who-adopts-information-from-whom or
who-listens-to-whom. In a viral marketing setting, edges can be understood as
who-influences-whom.

The main premise of our work is that by observing many different contagions
spreading among the nodes, we can infer the edges of the underlying propagation
network. If node $v$ tends to get infected soon after node $u$ for many
different contagions, then we can expect an edge $(u,v)$ to be present in the
network. By exploring correlations in node infection times, we aim to recover
the unobserved diffusion network.

\begin{figure*}[t]
  \centering
  \includegraphics[width=0.7\textwidth]{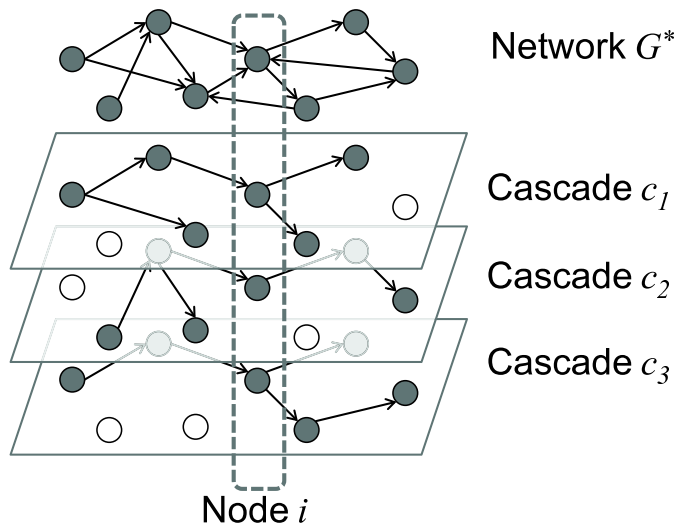}
\caption{The underlying true network over which contagions spread is
illustrated on the top. Each subsequent layer depicts a cascade created by the
diffusion of a particular contagion. For each cascade, nodes in gray are the
``infected'' nodes and the edges denote the direction in which the contagion
propagated. Now, given only the node infection times in each cascade we aim to
infer the connectivity of the underlying network $G^*$.}
  \label{fig:setcascades}
\end{figure*}

The concept of set of contagions over a network is illustrated in
Figure~\ref{fig:setcascades}. As a contagion spreads over the underlying
network it creates a trace, called {\em cascade}. Nodes of the cascade are the
nodes of the network that got infected by the contagion and edges of the cascade
represent edges of the network over which the contagion actually spread. On the
top of Figure~\ref{fig:setcascades}, the underlying true network over which
contagions spread is illustrated. Each subsequent layer depicts a cascade
created by a particular contagion. A priori, we do not know the connectivity of
the underlying true network and our aim is to infer this connectivity using the
infection times of nodes in many cascades.

We develop \netinf, a scalable algorithm for inferring networks of diffusion
and influence. We first formulate a generative probabilistic model of how, on a
fixed hypothetical network, contagions spread as directed trees (\ie, a
node infects many other nodes) through the network. Since we only observe times
when nodes get infected, there are many possible ways of the contagion could
have propagated through the network that are consistent with the observed data.
In order to infer the network we have to consider all possible ways of the
contagion spreading through the network. Thus, naive computation of the model
takes exponential time since there is a combinatorially large number of
propagation trees. We show that, perhaps surprisingly, computations over this
super-exponential set of trees can be performed in polynomial (cubic) time.
However, under such model, the network inference problem is still intractable.
Thus, we introduce a tractable approximation, and show that the objective
function can be both efficiently computed and efficiently optimized. By
exploiting a diminishing returns property of the problem, we prove that \netinf
infers near-optimal networks. We also speed-up \netinf by exploiting
the local structure of the objective function and by using lazy
evaluations~\cite{leskovec2007cost}.

In a broader context, our work here is related to the network structure
learning of probabilistic directed graphical models~\cite{friedman1999learning,getoor2003learning} 
where heuristic greedy hill-climbing or stochastic search that both offer 
no performance guarantees are usually used in practice. In contrast, our work 
here provides a novel formulation and a {\em tractable} polynomial time algorithm 
for inferring directed networks together with an approximation guarantee that 
ensures the inferred networks will be of near-optimal quality.

Our results on synthetic datasets show that we can reliably infer an
underlying propagation and influence network, regardless of the overall
network structure. Validation on real and synthetic datasets shows that \netinf
outperforms a baseline heuristic by an order of magnitude and correctly
discovers more than 90\% of the edges. We apply our algorithm to a real Web
information propagation dataset of 170 million blog and news articles over a
one year period.  Our results show that online news propagation networks tend
to have a core-periphery structure with a small set of core blog and news media
websites that diffuse information to the rest of the Web, news media websites
tend to diffuse the news faster than blogs and blogs keep discussing about news
longer time than media websites.

Inferring how information or viruses propagate over networks is crucial for a
better understanding of diffusion in networks. By modeling the structure of the
propagation network, we can gain insight into positions and roles various nodes
play in the diffusion process and assess the range of influence of nodes in the
network.

The rest of the paper is organized as follows. Section~\ref{sec:formulation} is
devoted to the statement of the problem, the formulation of the model and the
optimization problem. In section~\ref{sec:proposed}, an efficient reformulation
of the optimization problem is proposed and its solution is presented.
Experimental evaluation using synthetic and MemeTracker data are shown in
section~\ref{sec:evaluation}. We conclude with related work in
section~\ref{sec:related} and discussion of our results in
section~\ref{sec:conclusions}.

\section{Diffusion network inference problem}
\label{sec:formulation}
We next formally describe the problem where contagions propagate over an
unknown static directed network and create cascades. For each cascade we
observe times \emph{when} nodes got infected but not \emph{who} infected
them. The goal then is to infer the unknown network over which contagions 
originally propagated. In an information diffusion setting, each contagion 
corresponds to a different piece of information that spreads over the network 
and all we observe are the times when particular nodes adopted or mentioned 
particular information. The task then is to infer the network where a directed 
edge $(u,v)$ carries the semantics that node $v$ tends to get influenced by 
node $u$ (\ie, mentions or adopts the information after node $u$ does 
so as well).

\subsection{Problem statement}
\label{sec:problem}

Given a hidden directed network $G^*$, we observe multiple contagions spreading
over it. As the contagion $c$ propagates over the network, it leaves a trace, a
cascade, in the form of a set of triples $(u, v, t_v)_c$, which means that contagion 
$c$ reached node $v$ at time $t_v$ by spreading from node $u$ (\ie, by 
propagating over the edge $(u,v)$). We denote the fact that the cascade initially 
starts from some active node $v$ at time $t_{v}$ as $(\emptyset, v, t_v)_c$.

Now, we only get to observe the time $t_v$ when contagion $c$ reached node $v$
but not {\em how} it reached the node $v$, \ie, we only know that $v$
got infected by one of its neighbors in the network but do not know who $v$'s
neighbors are and who of them infected $v$. Thus, instead of observing the
triples $(u, v, t_v)_c$ that fully specify the trace of the contagion $c$
through the network, we only get to observe pairs $(v, t_v)_c$ that describe
the time $t_v$ when node $v$ got infected by the contagion $c$. Now, given 
such data about node infection times for many different contagions, we aim to
recover the unobserved directed network $G^*$, \ie, the network over which the
contagions originally spread.

We use the term {\em hit time} $t_u$ to refer to the time when a cascade created by a
contagion hits (infects, causes the adoption by) a particular node $u$. In
practice, many contagions do not hit all the nodes of the network. Simply, if a
contagion hits all the nodes this means it will infect every node of the network.
In real-life most cascades created by contagions are relatively small. Thus, if
a node $u$ is not hit by a cascade, then we set $t_u = \infty$. Then, the observed
data about a cascade $c$ is specified by the vector $\mathbf{t_c}=[t_{1},\dots,t_{n}]$ 
of hit times, where $n$ is the number of nodes in $G^{*}$, and $t_i$ is the time when 
node $i$ got infected by the contagion $c$ ($t_i=\infty$ if $i$ did not get infected 
by $c$).

Our goal now is to infer the network $G^*$. In order to solve this problem we
first define the probabilistic model of how contagions spread over the edges of
the network.
We first specify the contagion transmission model $P_c(u, v)$ that describes
how likely is that node $u$ spreads the contagion $c$ to node $v$. Based on the
model we then describe the probability $P(c | T)$ that the contagion $c$
propagated in a particular cascade tree pattern $T=(V_T, E_T)$, where tree $T$
simply specifies which nodes infected which other nodes (\eg, see
Figure~\ref{fig:setcascades}). Last, we define $P(c | G)$, which is the
probability that cascade $c$ occurs in a network $G$. And then, under this
model, we show how to estimate a (near-)maximum likelihood network $\hat{G}$, 
\ie, the network $\hat{G}$ that (approximately) maximizes the probability of 
cascades $c$ occurring in it.

\subsection{Cascade Transmission Model}

We start by formulating the probabilistic model of how contagions diffuse over
the network. We build on the Independent Cascade Model~\cite{kempe03maximizing}
which posits that an infected node infects each of its neighbors in the network
$G$ independently at random with some small chosen probability. This model
implicitly assumes that every node $v$ in the cascade $c$ is infected by at
most one node $u$. That is, it only matters when the first neighbor of $v$
infects it and all infections that come afterwards have no impact. Note that $v$ can 
have multiple of its neighbors infected but only one neighbor actually activates $v$.
Thus, the structure of a cascade created by the diffusion of contagion $c$ is fully
described by a directed tree $T$, that is contained in the directed graph $G$,
\ie, since the contagion can only spread over the edges of $G$ and each node
can only be infected by at most one other node, the pattern in which the
contagion propagated is a tree and a subgraph of $G$. 
Refer to Figure~\ref{fig:setcascades} for an illustration of a network and a set of
cascades created by contagions diffusing over it.

\xhdr{Probability of an individual transmission} The Independent Contagion
Model only implicitly models time through the epochs of the propagation. We
thus formulate a variant of the model that preserves the tree structure of
cascades and also incorporates the notion of time.

We think of our model of how a contagion transmits from $u$ to $v$ in two
steps. When a new node $u$ gets infected it gets a chance to transmit the
contagion to each of its currently uninfected neighbors $w$ independently with
some small probability $\beta$. If the contagion is transmitted we then sample
{\em the incubation time}, \ie, how long after $w$ got infected, $w$ will get a
chance to infect its (at that time uninfected) neighbors. Note that cascades in
this model are necessarily trees since node $u$ only gets to infect neighbors
$w$ that have not yet been infected.

First, we define the probability $P_{c}(u,v)$ that a node $u$ spreads the
cascade to a node $v$, \ie, a node $u$ influences/infects/transmits contagion
$c$ to a node $v$. Formally, $P_{c}(u,v)$ specifies the conditional probability
of observing cascade $c$ spreading from $u$ to $v$.

Consider a pair of nodes $u$ and $v$, connected by a directed edge $(u,v)$ and
the corresponding hit times $(u, t_u)_c$ and $(v, t_v)_c$. Since the contagion can
only propagate forward in time, if node $u$ got infected after node
$v$ ($t_u > t_v$) then $P_{c}(u,v)=0$, \ie, nodes can not influence nodes from
the past. On the other hand (if $t_u < t_v$) we make no assumptions about the
properties and shape of $P_{c}(u,v)$. To build some intuition, we can think
that the probability of propagation $P_{c}(u,v)$ between a pair of nodes $u$
and $v$ is decreasing in the difference of their infection times, \ie, the
farther apart in time the two nodes get infected the less likely they are to
infect one another.

However, we note that our approach allows for the contagion transmission model
$P_c(u,v)$ to be arbitrarily complicated as it can also depend on the
properties of the contagion $c$ as well as the properties of the nodes and
edges. For example, in a disease propagation scenario, node attributes could
include information about the individual's socio-economic status, commute
patterns, disease history and so on, and the contagion properties would 
include the strength and the type of the virus. This allows for great flexibility 
in the cascade transmission models as the probability of infection depends 
on the parameters of the disease and properties of the nodes.

\begin{table}
\centering
\begin{tabular}{l|l}
  {\bf Symbol} & {\bf Description} \\ \hline
  $G(V,E)$ & Directed graph with nodes $V$ and edges $E$ over which contagions spread\\
  $\beta$ & Probability that contagion propagates over an edge of $G$\\
  $\alpha$ & Incubation time model parameter (refer to Eq.~\ref{eq:incubation}) \\
  $E_\varepsilon$ & Set of $\varepsilon$-edges, $E\cap E_\varepsilon=\emptyset$ and $E \cup E_\varepsilon = V \times V$ \\
  $c$ & Contagion that spreads over $G$\\
  $t_u$ & Time when node $u$ got hit (infected) by a particular cascade \\
  $\bt_c$ & Set of node hit times in cascade $c$. $\bt_c[i]=\infty$ if node $i$ did not participate in $c$\\
  $\Delta_{u, v}$ & Time difference between the node hit times $t_v - t_u$ in a particular cascade \\
  $C = \{(c, \bt_c)\}$ & Set of contagions $c$ and corresponding hit times, \ie, the observed data\\
  $\mathcal{T}_c(G)$ & Set of all possible propagation trees of cascade $c$ on graph $G$ \\
  $T(V_T, E_T)$ & Cascade propagation tree, $T \in \mathcal{T}_c(G)$\\
  $V_T$ & Node set of $T$, $V_T =\{i\ |\ i \in V \text{and}\ \bt_c[i] < \infty \}$ \\
  $E_T$ & Edge set of $T$, $E_T \subseteq E \cup E_\varepsilon$ \\
\end{tabular}
\caption{Table of symbols.}
\end{table}

Purely for simplicity, in the rest of the paper we assume the simplest and most intuitive model 
where the probability of transmission depends only on the time difference between the node hit 
times $\Delta_{u, v} = t_v-t_u$. We consider two different models for the incubation time distribution 
$\Delta_{u, v}$, an exponential and a power-law model, each with parameter $\alpha$:

\begin{equation}
  P_c(u, v) = P_c(\Delta_{u, v}) \propto e^{-\frac{\Delta_{u, v}}{\alpha}} \mbox{  and  }
  P_c(u, v) = P_c(\Delta_{u, v}) \propto \frac{1}{\Delta_{u, v}^{\alpha}}.
  \label{eq:incubation}
\end{equation}

Both the power-law and exponential waiting time models have been argued for in the 
literature~\cite{barabasi05human,jure07cascades,malgrem08poisson}. In the end, our algorithm 
does not depend on the particular choice of the incubation time distribution and more complicated 
non-monotonic and multimodal functions can easily be chosen~\cite{crane08response,wallinga04epidemic,manuel11icml}. 
Also, we interpret $\infty+\Delta_{u, v}=\infty$, \ie, if $t_{u}=\infty$, then $t_{v}=\infty$ with probability 
$1$. Note that the parameter $\alpha$ can potentially be different for each edge in the network.

Considering the above model in a generative sense, we can think that the cascade $c$ reaches 
node $u$ at time $t_u$, and we now need to generate the time $t_v$ when $u$ spreads the cascade 
to node $v$. As cascades generally do not infect all the nodes of the network, we need to explicitly 
model the probability that the cascade stops. With probability $(1-\beta)$, the cascade stops, and 
never reaches $v$, thus $t_{v}=\infty$. With probability $\beta$, the cascade transmits over the edge 
$(u,v)$, and the hit time $t_{v}$ is set to $t_{u}+\Delta_{u, v}$, where $\Delta_{u, v}$ is the incubation 
time that passed between the hit times $t_{u}$ and $t_{v}$.

\xhdr{Likelihood of a cascade spreading in a given tree pattern $\bf T$} Next we calculate the likelihood 
$P(c|T)$ that contagion $c$ in a graph $G$ propagated in a particular tree pattern $T(V_T, E_T)$ under some
assumptions. This means we want to assess the probability that a cascade $c$ with hit times $\bt_c$ propagated 
in a particular tree pattern $T$.

Due to our modeling assumption that cascades are trees the likelihood is simply:

\begin{equation}
P(c|T) = \prod_{(u,v)\in E_T} \beta P_c(u, v) \prod_{u \in V_T, (u,x) \in E\backslash E_T} (1-\beta),
    \label{eq:ctree}
\end{equation}
where $E_T$ is the edge set and $V_T$ is the vertex set of tree $T$. Note that
$V_T$ is the set of nodes that got infected by $c$, \ie, $V_T \subset V$ and
contains elements $i$ of $\bt_c$ where $\bt_c(i)<\infty$. The above expression
has an intuitive explanation. Since the cascade spread in tree pattern $T$, the
contagion successfully propagated along those edges. And, along the edges where
the contagion did not spread, the cascade had to stop. Here, we assume independence 
between edges to simplify the problem. Despite this simplification, we later show empirically that 
\netinf works well in practice

Moreover, $P(c|T)$ can be rewritten as:

\begin{equation}
P(c|T) = \beta^{q} (1-\beta)^{r} \prod_{(u,v)\in E_T} P_c(u, v),
    \label{eq:ctree2}
\end{equation}
where $q = |E_T| = |V_T|-1$ is the number of edges in $T$ and counts the edges
over which the contagion successfully propagated. Similarly, $r$ counts the
number of edges that did not activate and failed to transmit the contagion: $r
= \sum_{u \in V_T} d_{out}(u) - q$, and $d_{out}(u)$ is the out-degree of node
$u$ in graph $G$.

We conclude with an observation that will come very handy later. Examining
Eq.~\ref{eq:ctree2} we notice that the first part of the equation before the
product sign does not depend on the edge set $E_T$ but only on the vertex set
$V_T$ of the tree $T$. This means that the first part is constant for all trees
$T$ with the same vertex set $V_T$ but possibly different edge sets $E_T$. For
example, this means that for a fixed $G$ and $c$ maximizing $P(c|T)$ with
respect to $T$ (\ie, finding the most probable tree), does not depend
on the second product of Eq.~\ref{eq:ctree}. This means that when optimizing,
one only needs to focus on the first product where the edges of the tree $T$
simply specify how the cascade spreads, \ie, every node in the cascade gets
influenced by exactly one node, that is, its parent.

\xhdr{Cascade likelihood} We just defined the likelihood $P(c|T)$ that a single
contagion $c$ propagates in a particular tree pattern $T$. Now, our aim is to
compute $P(c|G)$, the probability that a cascade $c$ occurs in a graph $G$.
Note that we observe only the node infection times while the exact propagation
tree $T$ (who-infected-whom) is unknown. In general, over a given graph $G$
there may be multiple different propagation trees $T$ that are consistent with
the observed data. For example, Figure~\ref{fig:mtrees} shows three different
cascade propagation paths (trees $T$) that are all consistent with the observed
data $\bt_c=(t_a=1, t_c=2, t_b=3, t_e=4)$

\begin{figure}[t] 
  \centering
  \includegraphics[width=\textwidth]{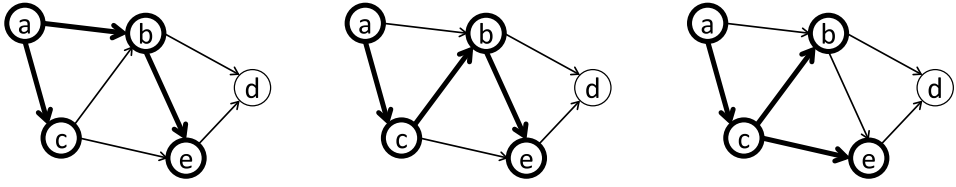}
\caption{Different propagation trees $T$ of cascade $c$ that are all
consistent with observed node hit times $c=(t_a=1, t_c=2, t_b=3, t_e=4)$.
In each case, wider edges compose the tree, while thinner edges are the rest of
the edges of the network $G$.}
\label{fig:mtrees}
\end{figure}

So, we need to combine the probabilities of individual propagation trees into a 
probability of a cascade $c$. We achieve this by considering all possible propagation 
trees $T$ that are supported by network $G$, \ie, all possible ways in which cascade 
$c$ could have spread over $G$:

\begin{equation}
  P(c | G) = \sum_{T \in \mathcal{T}_c(G)} P(c|T) P(T|G),
    \label{eq:pcasc}
\end{equation}
where $c$ is a cascade and $\mathcal{T}_c(G)$ is the set of all the directed
connected spanning trees on a subgraph of $G$ induced by the nodes that got 
hit by cascade $c$. Note that even though the sum ranges over all possible 
spanning trees $T \in \mathcal{T}_c(G)$, in case $T$ is inconsistent with the 
observed data, then $P(c|T)=0$.

Assuming that all trees are a priori equally likely (\ie, $P(T|G)=1/|\mathcal{T}_c(G)|$) 
and using the observation from Equation~\ref{eq:ctree2} we obtain:

\begin{equation}
  P(c | G) \propto \sum_{T \in \mathcal{T}_c(G)} \prod_{(u,v)\in E_T} P_c(u, v).
    \label{eq:pcasc2}
\end{equation}

Basically, the graph $G$ defines the skeleton over which the cascades can
propagate and $T$ defines a particular possible propagation tree. There may be
many possible trees that \emph{explain} a single cascade (see
Fig.~\ref{fig:mtrees}), and since we do not know in which particular tree
pattern the cascade really propagated, we need to consider all possible
propagation trees $T$ in $\mathcal{T}_c(G)$. Thus, the sum over $T$ is a sum
over all directed spanning trees of the graph induced by the vertices that got
hit by the cascade $c$.

We just computed the probability of a single cascade $c$ occurring in $G$, and we
now define the probability of a set of cascades $C$ occurring in $G$ simply as:

\begin{equation}
  P(C| G) = \prod_{c \in C} P(c | G),
  \label{eq:pgraph}
\end{equation}
where we again assume conditional independence between cascades given the graph
$G$.

\subsection{Estimating the network that maximizes the cascade likelihood}

Now that once we have formulated the cascade transmission model, we now 
state the {\em diffusion network inference problem}, where the goal is to 
find $\hat G$ that solves the following optimization problem:

\begin{problem}
Given a set of node infection times $\bt_c$ for a set of cascades $c \in C$,
a propagation probability parameter $\beta$ and an incubation time distribution
$P_{c}(u,v)$, find the network $\hat{G}$ such that:

\begin{equation}
  \hat G = \argmax_{|G|\leq\nedge} P(C|G),
  \label{eq:pgraph2}
\end{equation}
where the maximization is over all directed graphs $G$ of at most $\nedge$
edges, and $P(C|G)$ is defined by equations~\ref{eq:pgraph},~\ref{eq:pcasc}
and~\ref{eq:ctree}.
\end{problem}

We include the constraint on the number of edges in $\hat{G}$ simply because we seek
for a sparse solution, since real graphs are sparse. We discuss how to choose $\nedge$ 
in further sections of the paper.

The above optimization problem seems wildly intractable. To evaluate 
Eq.~\eqref{eq:pgraph}, we need to compute Eq.~\eqref{eq:pcasc} for each cascade
$c$, \ie, the sum over all possible spanning trees $T$. The number of trees can
be super-exponential in the size of $G$ but perhaps surprisingly, this
super-exponential sum can be performed in time polynomial in the number $n$ of
nodes in the graph $G$, by applying Kirchhoff's matrix tree
theorem~\cite{knuth1968art}:

\begin{theorem}[\cite{tutte48matrix}] \label{thm:tutte}
  If we construct a matrix $A$ such that
  $a_{i,j} = \sum_k w_{k, j}$ if $i = j$ and $a_{i,j} = -w_{i,j}$ if $i \neq j$
  and if $A_{x,y}$ is the matrix created by removing any row $x$ and column $y$ from $A$, then
  \begin{equation}\label{eq:matrixtree}
  (-1)^{x+y} \det(A_{x,y}) = \sum_{T \in A} \prod_{(i,j) \in T} w_{i,j},
  \end{equation}
  where $T$ is each directed spanning tree in $A$.
\end{theorem}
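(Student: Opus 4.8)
\noindent\textit{Proof proposal.}
My plan is to prove the identity in two stages: first reduce the general statement to the symmetric case $x=y$, then prove that case by expanding the determinant via multilinearity and evaluating the resulting $0/\pm1$ minors combinatorially. Throughout I would regard both sides of Eq.~\eqref{eq:matrixtree} as polynomials in the weights $w_{i,j}$ and assume, as the diagonal convention $a_{i,j}=\sum_k w_{k,j}$ tacitly does, that there are no self-loops ($w_{i,i}=0$); self-loops belong to no spanning tree, so they do not affect the right-hand side.

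\noindent\textit{Reduction to $x=y$.}
With $w_{i,i}=0$, every column of $A$ sums to zero, so $\mathbf 1^{\top}A=\mathbf 0$ and $\det A=0$. Writing $C_{i,j}=(-1)^{i+j}\det(A_{i,j})$ for the cofactors, Laplace expansion gives $\sum_i a_{i,j}C_{i,k}=\det(A)\,\delta_{jk}=0$ for all $j,k$, i.e.\ for each fixed $k$ the vector $(C_{1,k},\dots,C_{n,k})$ lies in the left null space of $A$. If $\operatorname{rank}(A)=n-1$ that null space is spanned by $\mathbf 1$, so $C_{i,k}$ is independent of $i$ and therefore equals $C_{k,k}=\det(A_{k,k})$; if $\operatorname{rank}(A)\le n-2$ then every $(n-1)\times(n-1)$ minor of $A$ vanishes and $C_{i,k}=\det(A_{k,k})=0$ regardless. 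In either case $(-1)^{x+y}\det(A_{x,y})=\det(A_{y,y})$, so it is enough to establish Eq.~\eqref{eq:matrixtree} for $x=y=:r$, where the sign is $+1$.

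\noindent\textit{The case $x=y=r$.}
Index the rows and columns of $A_{r,r}$ by $V':=V\setminus\{r\}$. For $j\in V'$ the $j$-th column of $A$ equals $\sum_{i\in V}w_{i,j}(\mathbf e_j-\mathbf e_i)$; passing to $A_{r,r}$ removes column $r$ (so $j$ ranges over $V'$) and removes row $r$ (encoded by setting $\mathbf e_r:=\mathbf 0$). Expanding $\det(A_{r,r})$ by multilinearity in its $n-1$ columns gives
\[
  \det(A_{r,r})=\sum_{f}\Bigl(\prod_{j\in V'}w_{f(j),j}\Bigr)\det(M_f),
\]
where $f$ ranges over maps $V'\to V$ (terms with $f(j)=j$ contribute a zero column and may be dropped) and $M_f$ has $j$-th column $\mathbf e_j-\mathbf e_{f(j)}$, with $\mathbf e_r=\mathbf 0$. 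I would then show $\det(M_f)=1$ exactly when $\{(f(j),j):j\in V'\}$ is a directed spanning tree oriented toward $r$, and $\det(M_f)=0$ otherwise. To see this, form the digraph $H_f$ on $V$ with arcs $j\to f(j)$; here $r$ is a sink and every other vertex has out-degree one. If $H_f$ contains a directed cycle (necessarily avoiding $r$), the corresponding columns of $M_f$ telescope to $\mathbf 0$ and $\det(M_f)=0$. Otherwise $H_f$ is acyclic with unique sink $r$, hence a spanning in-arborescence to $r$; listing $V'$ so that $a<b$ whenever $H_f$ has an arc $j_a\to j_b$ makes $M_f$ lower-triangular with $+1$'s on the diagonal (the $\mathbf e_j$ terms) and its remaining nonzero entries ($-1$, from the $\mathbf e_{f(j)}$ terms) strictly below the diagonal, so $\det(M_f)=1$ --- reordering rows and columns by the same permutation does not change the determinant. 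Substituting back identifies $\det(A_{r,r})$ with $\sum_{T}\prod_{(i,j)\in T}w_{i,j}$ summed over exactly the spanning arborescences to $r$, which is the assertion.

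\noindent\textit{Main obstacle.}
No single step is deep; the real care goes into the sign and orientation bookkeeping --- fixing the convention so that ``directed spanning tree in $A$'' means an arborescence converging to the vertex of the deleted column, confirming that the telescoping argument annihilates every $f$ whose functional digraph contains a cycle, and checking that the simultaneous reordering of rows and columns of $M_f$ preserves its determinant. Once those conventions are settled, the remainder is routine multilinear algebra together with the bijection between nonvanishing terms of the multilinear expansion and spanning arborescences.
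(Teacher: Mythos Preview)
The paper does not prove this theorem at all; it is stated with the attribution \texttt{[\textbackslash cite\{tutte48matrix\}]} and invoked as a classical result of Tutte, with no accompanying argument. So there is no ``paper's own proof'' to compare against.

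On its own merits, your proposal is essentially a correct and standard proof of the directed matrix-tree theorem. The reduction from an arbitrary minor $A_{x,y}$ to the principal minor $A_{y,y}$ via the cofactor/null-space argument is valid: since each column of $A$ sums to zero, $\mathbf 1$ spans the left null space generically, forcing the cofactors $C_{x,y}$ to be constant in $x$. The second stage --- expanding $\det(A_{r,r})$ by multilinearity over columns, interpreting each term as a functional digraph $f:V'\to V$, killing the terms whose digraph contains a cycle by a telescoping linear dependence, and showing the surviving terms have determinant $+1$ by simultaneously permuting rows and columns into triangular form --- is exactly the classical route and is carried out correctly.

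One point of bookkeeping worth tightening: the edges you actually select are $(f(j),j)$, so in the original orientation each non-root vertex $j$ has exactly one \emph{incoming} edge and $r$ has none; this is an arborescence \emph{diverging from} $r$, not converging to it. Your auxiliary digraph $H_f$ reverses the arrows (arcs $j\to f(j)$), which is why $r$ becomes a sink there, but the phrase ``spanning in-arborescence to $r$'' applies to $H_f$, not to the tree of edges $(f(j),j)$ that appears in the product $\prod w_{f(j),j}$. You flag orientation conventions as the main obstacle, and indeed that is the only place the write-up wobbles; once you fix the wording to say that $\det(A_{y,y})$ enumerates spanning arborescences rooted at $y$ with edges directed away from $y$ (consistent with the paper's use, where cascade trees are rooted at the first-infected node and edges point forward in time), the argument is clean.
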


In our case, we set $w_{i,j}$ to be simply $P_c(i,j)$ and we compute the
product of the determinants of $|C|$ matrices, one for each cascade, which is
exactly Eq.~\ref{eq:pcasc}. Note that since edges $(i,j)$ where $t_{i}\geq t_{j}$ 
have weight 0 (i.e., they are not present), given a fixed cascade $c$, the collection 
of edges with po\-si\-tive weight forms a directed \emph{acyclic} graph (DAG). A 
DAG with a time-ordered labeling of its nodes has an upper triangular connectivity 
matrix. Thus, the matrix $A_{x, y}$ of Theorem~\ref{thm:tutte} is, by construction, upper 
triangular. Fortunately, the determinant of an upper triangular matrix is simply the product 
of the elements of its diagonal. This means that instead of using super-exponential time, we 
are now able to evaluate Eq.~\ref{eq:pgraph} in time $(|C| \cdot |V|^2)$ (the time required to 
build $A_{x, y}$ and compute the determinant for each of the $|C|$ cascades).

However, this does not completely solve our problem for two reasons: First,
while cuadratic time is a drastic improvement over a super-exponential computation,
it is still too expensive for the large graphs that we want to consider.
Second, we can use the above result only to evaluate the quality of a
\emph{particular} graph $G$, while our goal is to find the best graph
$\hat{G}$. To do this, we would need to search over \emph{all} graphs $G$ to
find the best one. Again, as there is a super-exponential number of graphs,
this is not practical. To circumvent this one could propose some ad hoc search
heuristics, like hill-climbing. However, due to the combinatorial nature of the
likelihood function, such a procedure would likely be prone to local maxima. We
leave the question of efficient maximization of Eq.~\ref{eq:pcasc} where
$P(c|G)$ considers all possible propagation trees as an interesting open
problem.

\section{Alternative formulation and the \netinf algorithm}
\label{sec:proposed}
The diffusion network inference problem defined in the previous section does
not seem to allow for an efficient solution. We now propose an alternative
formulation of the problem that is tractable both to compute and also to
optimize.

\subsection{An alternative formulation}

We use the same tree cascade formation model as in the previous section.
However, we compute an approximation of the likelihood of a single cascade by
considering only the most likely tree instead of all possible propagation trees. We 
show that this approximate likelihood is tractable to compute. Moreover, we
also devise an algorithm that provably finds networks with near optimal
approximate likelihood. In the remainder of this section, we informally write 
likelihood and log-likelihood even though they are approximations. However, all 
approximations are clearly indicated.

First we introduce the concept of $\varepsilon$-edges to account for the fact
that nodes may get infected for reasons other than the network influence. For
example, in online media, not all the information propagates via the network,
as some is also pushed onto the network by the mass
media~\cite{katz1955personal,dodds07influentials} and thus a disconnected
cascade can be created. Similarly, in viral marketing, a person may purchase a
product due to the influence of peers (\emph{i.e.}, network effect) or for some
other reason (\eg, seing a commercial on TV)~\cite{jure06viral}.

\xhdr{Modeling external influence via $\varepsilon$-edges} To account for such
phenomena when a cascade ``jumps'' across the network we can think of creating
an additional node $x$ that represents an {\em external influence} and can
infect \emph{any} other node $u$ with small probability. We then connect the
external influence node $x$ to every other node $u$ with an $\varepsilon$-edge.
And then every node $u$ can get infected by the external source $x$ with a very
small probability $\varepsilon$. For example, in case of information diffusion
in the blogosphere, such a node $x$ could model the effect of blogs getting infected
by the mainstream media.

However, if we were to adopt this approach and insert an additional external
influence node $x$ into our data, we would also need to infer the edges
pointing out of $x$, which would make our problem even harder. Thus, in order
to capture the effect of external influence, we introduce a concept of
$\varepsilon$-edge. If there is not a network edge between a node $i$ and a node $j$ 
in the network, we add an $\varepsilon$-edge and then node $i$ can infect node 
$j$ with a small probability $\varepsilon$. Even though adding $\varepsilon$-edges makes 
our graph $G$ a clique (\ie, the union of network edges and $\varepsilon$-edges creates a
clique), the $\varepsilon$-edges play the role of external influence node.

Thus, we now think of graph $G$ as a fully connected graph of two disjoint sets
of edges, the network edge set $E$ and the $\varepsilon$-edge set
$E_\varepsilon$, \ie, $E \cap E_\varepsilon = \emptyset$ and $E \cup
E_\varepsilon = V \times V$.

Now, any cascade propagation tree $T$ is a combination of network and
$\varepsilon$-edges. As we model the external influence via the
$\varepsilon$-edges, the probability of a cascade $c$ occurring in tree $T$
(\ie, the analog of Eq.~\ref{eq:ctree}) can now be computed as:

\begin{equation}
  P(c|T) = \prod_{u \in V_T} \ \ \prod_{v \in V} P_c'(u,v),
  \label{eq:probcasc2}
\end{equation}
where we compute the transmission probability $P_c'(u,v)$ as follows:

\begin{align*}
P_c'(u,v)=
\begin{cases}
  \beta P_c(t_v-t_u) & \text{if $t_u < t_v$ and $(u,v) \in E_T \cap E$ \ \ \ \ \ \ \ {\em $(u,v)$ is network edge}}\\
  \varepsilon P_c(t_v-t_u) & \text{if $t_u < t_v$ and $(u,v) \in E_T \cap E_\varepsilon$ \ \ \ \ \ {\em $(u,v)$ is $\varepsilon$-edge}}\\
  1-\beta        & \text{if $t_v=\infty$ and $(u,v) \in E \backslash E_T$ \ \ \ \ \ \ \ \ \ {\em $v$ is not infected, network edge}}\\
  1-\varepsilon  & \text{if $t_v=\infty$ and $(u,v) \in E_\varepsilon \backslash E_T$ \ \ \ \ \ \ \ {\em $v$ is not infected, $\varepsilon$-edge}}\\
  0              & \text{else (\ie, $t_u \ge t_v$).}\\
\end{cases}
\end{align*}

Note that above we distinguish four type of edges: network and
$\varepsilon$-edges that participated in the diffusion of the contagion and 
network and $\varepsilon$-edges that did not participate in the diffusion.

Figure~\ref{fig:epsedges} further illustrates this concept. First,
Figure~\ref{fig:epsedges1} shows an example of a graph $G$ on five nodes and
four network edges $E$ (solid lines), and any other possible edge is the
$\varepsilon$-edge (dashed lines). Then, Figure~\ref{fig:epsedges2} shows an
example of a propagation tree $T=\{(a,b), (b,c), (b,d)\}$ in graph $G$. We only
show the edges that play a role in Eq.~\ref{eq:probcasc2} and label them with
four different types: (a) network edges that transmitted the contagion, (b)
$\varepsilon$-edges that transmitted the contagion, (c) network edges that failed
to transmit the contagion, and (d) $\varepsilon$-edges that failed to transmit
the contagion.

\begin{figure}[t]
  \subfigure[Graph $G$ on five vertices and four network edges (solid edges). $\varepsilon$-edges shown as dashed lines.]{\includegraphics[width=0.4\textwidth]{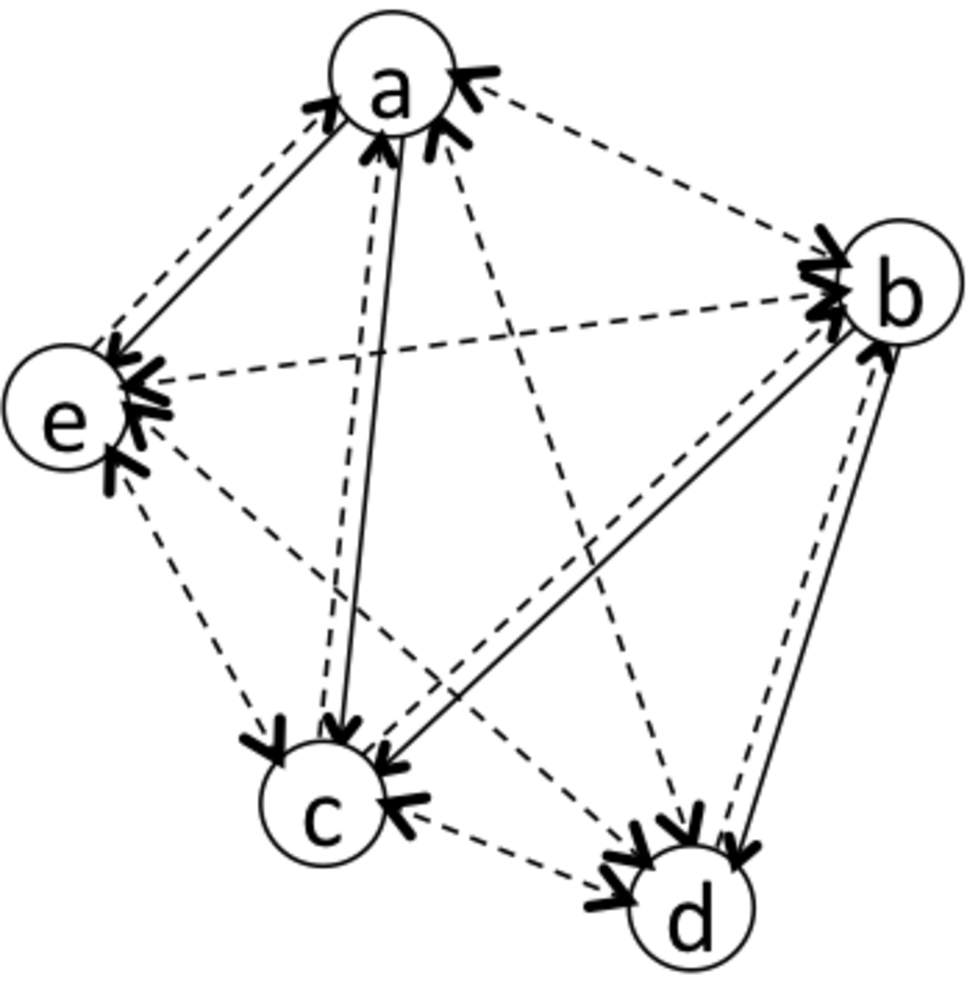}
    \label{fig:epsedges1}} \hspace{2cm}
  \subfigure[Cascade propagation tree $T=\{(a,b), (b,c), (b,d)\}$]{\includegraphics[width=0.4\textwidth]{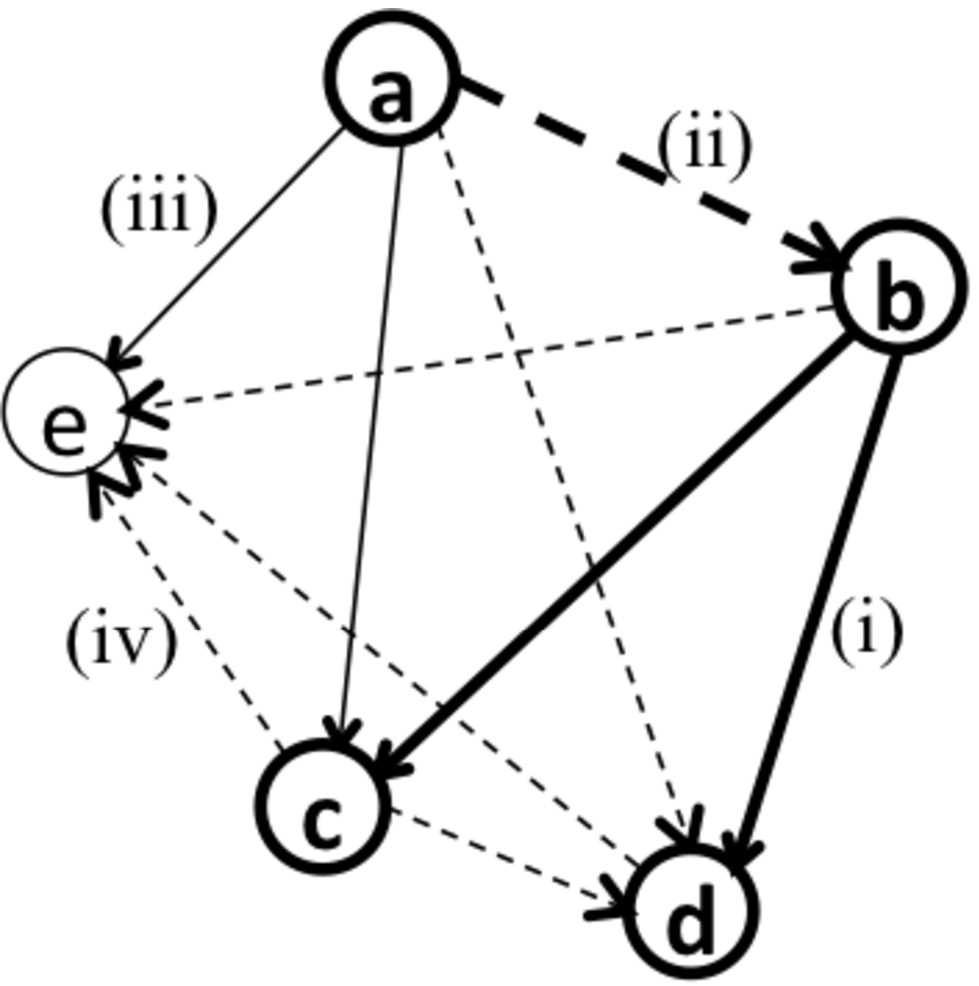}
    \label{fig:epsedges2}}
  \caption{(a) Graph $G$: Network edges $E$ are shown as solid, and $\varepsilon$-edges are shown as dashed lines.
  (b) Propagation tree $T=\{(a,b), (b,c), (b,d)\}$. Four types of edges are labeled: (i) network edges that
  transmitted the contagion (solid bold), (ii) $\varepsilon$-edges that transmitted the contagion (dashed bold),
  (iii) network edges that failed to transmit the contagion (solid), and (iv) $\varepsilon$-edges that failed to transmit the 
  contagion (dashed).}
  \label{fig:epsedges}
\end{figure}

We can now rewrite the cascade likelihood $P(c | T)$ as combination of products
of edge-types and the product over the edge incubation times:

\begin{align}
  P(c|T) &=& \beta^q \ \varepsilon^{q'} \ (1-\beta)^s \ (1-\varepsilon)^{s'} \prod_{(u,v)\in E_T} P_c(v, u) \\
  &\approx& \beta^q \ \varepsilon^{q'} \ (1-\varepsilon)^{s+s'} \prod_{(u,v)\in E_T} P_c(v, u),
  \label{eq:probcasc3}
\end{align}
where $q$ is the number of network edges in $T$ (type (a) edges in
Fig.~\ref{fig:epsedges2}), $q'$ is the number of $\varepsilon$-edges in $T$,
$s$ is the number of network edges that did not transmit and $s'$ is the number
of $\varepsilon$-edges that did not transmit. Note that the above approximation
is valid since real networks are sparse and cascades are generally small, and hence
$s' \gg s$. Thus, even though $\beta \gg \varepsilon$ we expect $(1-\beta)^s$
to be of about same order of magnitude as $(1-\varepsilon)^{s'}$.

The formulation in Equation~\ref{eq:probcasc3} has several benefits. Due to the
introduction of $\varepsilon$-edges the likelihood $P(c|T)$ is always positive. For
example, even if we consider graph $G$ with no edges, $P(c|T)$ is still well
defined as we can explain the tree $T$ via the diffusion over the
$\varepsilon$-edges. A second benefit that will become very useful later is that
the likelihood now becomes monotonic in the network edges of $G$. This means
that adding an edge to $G$ (\ie, converting $\varepsilon$-edge into a network
edge) only increases the likelihood.

\xhdr{Considering only the most likely propagation tree} So far we introduced
the concept of $\varepsilon$-edges to model the external influence or diffusion
that is exogenous to the network, and introduce an approximation to treat all
edges that did not participate in the diffusion as $\varepsilon$-edges.

Now we consider the last approximation, where instead of considering all
possible cascade propagation trees $T$, we only consider the most likely
cascade propagation trees $T$:

\begin{align}
  P(C | G) = \prod_{c \in C} \ \sum_{T \in \mathcal{T}_c(G)} P(c|T)
  \approx \prod_{c \in C} \ \max_{T \in \mathcal{T}_c(G)} P(c|T).
  \label{eq:pcasc2}
\end{align}

Thus now we aim to solve the network inference problem by finding a graph $G$
that maximizes Equation~\ref{eq:pcasc2}, where $P(c|T)$ is defined in
Equation~\ref{eq:probcasc3}.

This formulation simplifies the original network inference
problem by considering the most likely (\emph{best}) propagation tree $T$ per
cascade $c$ instead of considering all possible propagation trees $T$ for each
cascade $c$. Although in some cases we expect the likelihood of $c$ with respect
to the true tree $T'$ to be much higher than that with respect to any competing tree 
$T''$ and thus the probability mass will be concentrated at $T'$, there might be some 
cases in which the probability mass does not concentrate on one particular T. However,
we run extensive experiments on small networks with different structures in which both 
the original network inference problem and the alternative formulation can be solved 
using exhaustive search. Our experimental results looked really similar and the results were 
indistinguishable. Therefore, we consider our approximation to work well in practice.

For convenience, we work with the log-likelihood $\log P(c|T)$ rather than
likelihood $P(c|T)$. Moreover, instead of directly maximizing the
log-likelihood we equivalently maximize the following objective function that defines the
improvement of log-likelihood for cascade $c$ occurring in graph $G$ over $c$
occurring in an empty graph $\bar{K}$ (\ie, graph with only $\varepsilon$-edges
and no network edges):

\begin{equation}
  F_c(G) = \max_{T \in \mathcal{T}_c(G)} \log P(c | T) - \max_{T \in \mathcal{T}_c(\bar{K})} \log P(c | T).
\label{eq:pcasc3}
\end{equation}

Maximizing Equation~\eqref{eq:pcasc2} is equivalent to maximizing the following log-likelihood function:
\begin{equation}
  F_C(G) = \sum_{c \in C} F_c(G).
\label{eq:pcasc4}
\end{equation}

We now expand Eq.~\eqref{eq:pcasc3} and obtain an instance of a {\em simplified
diffusion network inference problem}:

\begin{equation}
  \hat{G} =  \arg \max_G F_C(G) = \sum_{c \in C} \max_{T \in \mathcal{T}_c(G)} \sum_{(i,j)\in E_T} w_{c}(i,j),
  \label{eq:pcasc5}
\end{equation}
where $w_{c}(i,j)= \log P_c'(i, j) -\log \varepsilon$ is a non-negative weight
which can be interpreted as the improvement in log-likelihood of edge $(i,j)$
under the most likely propagation tree $T$ in $G$. Note that by the
approximation in Equation~\ref{eq:probcasc3} one can ignore the contribution of
edges that did not participate in a particular cascade $c$. The contribution of
these edges is constant, \ie, independent of the particular shape that propagation
tree $T$ takes. This is due to the fact that each spanning tree $T$ of $G$ with
node set $V_T$ has $|V_T|-1$ (network and $\varepsilon$-) edges that
participated in the cascade, and all remaining edges stopped the cascade from
spreading. The number of non-spreading edges depends only on the node set $V_T$
but {\em not} the edge set $E_T$. Thus, the tree $T$ that maximizes $P(c | T)$ also
maximizes $\sum_{(i,j)\in E_T} w_{c}(i,j)$.

Since $T$ is a tree that maximizes the sum of the edge weights this means that
the most likely propagation tree $T$ is simply the {\em maximum weight directed
spanning tree} of nodes $V_T$, where each edge $(i,j)$ has weight $w_{c}(i,j)$,
and $F_c(G)$ is simply the sum of the weights of edges in $T$.

We also observe that since edges $(i,j)$ where $t_{i} \geq t_{j}$ have weight 0
(\emph{i.e.}, such edges are not present) then the outgoing edges of any node
$u$ only point forward in time, \ie, a node can not infect already infected
nodes. Thus for a fixed cascade $c$, the collection of edges with positive
weight forms a directed \emph{acyclic} graph (DAG).

Now we use the fact that the collection of edges with positive weights forms a
directed acyclic graph by observing that the maximum weight directed spanning
tree of a DAG can be computed efficiently:

\begin{proposition}
\label{prop:dagtree}
  In a DAG $D(V,E,w)$ with vertex set $V$ and nonnegative edge weights $w$, the maximum
  weight directed spanning tree can be found by choosing, for each node $v$, an
  incoming edge $(u,v)$ with maximum weight $w(u,v)$.
\end{proposition}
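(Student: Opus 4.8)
The plan is to exploit the defining property of a spanning tree in a DAG: every node except the root has exactly one incoming edge. First I would recall that a directed spanning tree $T$ of a vertex set $V$ rooted at some node $\rho$ is, by definition, a subgraph in which $\rho$ has in-degree $0$ and every other vertex has in-degree exactly $1$, and which is connected (equivalently, acyclic). The crucial observation is that in a DAG, the acyclicity and connectivity constraints are automatically satisfied once we impose only the in-degree constraint — this is where the DAG hypothesis does all the work, and it is the step I expect to require the most care to state cleanly.

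Concretely, I would argue as follows. Fix a topological order $v_1, v_2, \dots, v_m$ of the DAG (which exists precisely because the graph is acyclic), so that every edge $(v_i, v_j)$ has $i < j$. For each node $v_j$ with $j \geq 2$, let $e_j$ be a maximum-weight incoming edge, and let $M = \{e_2, \dots, e_m\}$. I claim $M$ is a directed spanning tree rooted at $v_1$: it has $m-1$ edges, every node but $v_1$ has exactly one incoming edge in $M$, and following incoming edges backward from any node strictly decreases the topological index, so the process terminates at $v_1$ and no cycle can form. Hence $M$ is connected and rooted at $v_1$, i.e. a valid directed spanning tree. (If some node other than the source has no incoming edge at all, then no spanning tree rooted appropriately exists; the statement is implicitly about the case where the maximum-weight directed spanning tree exists, which in our application is guaranteed because every infected node $v \neq$ source has an incoming $\varepsilon$-edge of positive weight from any earlier-infected node.)

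For optimality, the key point is that the in-degree constraint decouples across nodes. Any directed spanning tree $T$ of $V$ assigns to each non-root node exactly one incoming edge, and the total weight $\sum_{(u,v) \in E_T} w(u,v) = \sum_{v \neq \text{root}} w(\text{parent}_T(v), v)$ is a sum of independent per-node contributions. Therefore
\begin{equation*}
\sum_{(u,v)\in E_T} w(u,v) \;=\; \sum_{v \neq \rho} w(\mathrm{parent}_T(v), v) \;\leq\; \sum_{v \neq \rho} \max_{(u,v) \in E} w(u,v) \;=\; \sum_{(u,v) \in M} w(u,v),
\end{equation*}
so $M$ is at least as heavy as any spanning tree $T$. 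The only subtlety is the quantifier on the root: a priori different spanning trees may have different roots, so one should note that in a DAG every spanning tree must be rooted at a source (the unique node from which all others are reachable along tree edges must itself be topologically minimal among tree vertices), and if there is a single global source all trees share it; otherwise one takes the best choice of root, and the greedy rule of "pick the heaviest incoming edge for every node that has one" still dominates. I expect the main obstacle to be nothing deep but rather making the argument about roots and reachability airtight, since the proposition as stated elides it; the weight-decoupling inequality itself is immediate once the combinatorial structure is pinned down.
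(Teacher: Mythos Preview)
Your proposal is correct and takes essentially the same approach as the paper: decompose the tree weight as a sum of per-node incoming-edge contributions, $\sum_{v} w(\mathrm{Par}_T(v),v)$, and use the DAG property to argue these can be maximized independently without creating cycles. Your treatment is more careful than the paper's two-line sketch---you explicitly invoke a topological order to verify feasibility and discuss the root subtlety, whereas the paper simply asserts the decoupling holds ``without creating any cycles'' and that ``the root is handled appropriately''---but the core idea is identical.
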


\begin{proof}
The score
$$
  S(T) = \sum_{(i,j)\in T} w(i,j)=\sum_{i\in V} w(Par_{T}(i),i)
$$
of a tree $T$ is the sum of the incoming edge weights $w(Par_{T}(i),i)$ for
each node $i$, where $Par_{T}(i)$ is the parent of node $i$ in $T$ (and the
root is handled appropriately).  Now,
$$
  \max_{T} S(T) =\max_{T} \sum_{(i,j)\in T} w(i,j)=\sum_{i\in V} \max_{Par_{T}(i)}w(Par_{T}(i),i).
$$

Latter equality follows from the fact that since $G$ is a DAG, the maximization
can be done independently for each node without creating any cycles.
\end{proof}

This proposition is a special case of the more general maximum spanning tree
(MST) problem in directed graphs~\cite{edmonds67branchings}. The important fact
now is that we can find the best propagation tree $T$ in time $O(|V_T| D_{in})$, i.e., linear
in the number of edges and the maximum in-degree $D_{in}=\max_{u\in V_{T}}d_{in}(u)$ by simply 
selecting an incoming edge of highest weight for each node $u \in V_T$. Algorithm~\ref{alg:dagtree} 
provides the pseudocode to efficiently compute the maximum weight directed spanning tree of a DAG.

\begin{algorithm}[t]
\caption{Maximum~weight directed spanning tree of a DAG}
  \label{alg:dagtree}
  \begin{algorithmic}
  \REQUIRE Weighted directed acyclic graph $D(V,E,w)$
  \STATE $T \leftarrow \{\}$

  \FORALL{$i \in V$}
  \STATE $Par_T(i) = \arg \max_j w(j,i)$
  \STATE $T \leftarrow T\cup\{(Par_T(i),j)\}$
  \ENDFOR

  \RETURN $T$
\end{algorithmic}
\end{algorithm}

Putting it all together we have shown how to efficiently evaluate the log-likelihood $F_C(G)$ of 
a graph $G$. To find the most likely tree $T$ for a single cascade takes $O(|V_T| D_{in})$, and 
this has to be done for a total of $|C|$ cascades. Interestingly, this is independent of the size 
of graph $G$ and only depends on the amount of observed data (\ie, size and the number of cascades).

\subsection{The \netinf algorithm for efficient maximization of $\bf F_C(G)$}

Now we aim to find graph $G$ that maximizes the log-likelihood $F_C(G)$. First
we notice that by construction $F_{C}(\bar{K})=0$, \emph{i.e.}, the empty graph
has score 0. Moreover, we observe that the objective function $F_{C}$ is
non-negative and monotonic. This means that $F_{C}(G)\leq F_{C}(G')$ for graphs $G(V,E)$ and $G'(V, E')$,
where $E \subseteq E'$. Hence adding more edges to $G$ does not decrease the
solution quality, and thus the complete graph maximizes $F_{C}$.  Monotonicity can be 
shown by observing that, as edges are added to $G$, $\varepsilon$-edges are converted 
to network edges, and therefore the weight of any tree (and therefore the value of the 
maximum spanning tree) can only increase.
However, since real-world social and information networks are usually sparse, we are
interested in inferring a \emph{sparse} graph $G$, that only contains some
small number $k$ of edges. Thus we aim to solve:

\begin{problem}
  Given the infection times of a set of cascades $C$, probability of propagation $\beta$
  and the incubation time distribution $P_{c}(i,j)$, find $\hat G$ that
  maximizes:
  \begin{equation}
      G^{*}=\argmax_{|G|\leq \nedge}F_{C}(G),
  \label{eq:maxfc2}
  \end{equation}
  where the maximization is over all graphs $G$ of at most $\nedge$ edges, and $F_{C}(G)$
  is defined by Eqs.~\ref{eq:pcasc4} and~\ref{eq:pcasc5}.
\end{problem}

Naively searching over all $k$ edge graphs would take time exponential in $k$,
which is intractable. Moreover, finding the optimal solution to
Eq.~\eqref{eq:maxfc2} is NP-hard, so we cannot expect to find the optimal
solution:

\begin{theorem}\label{thm:hardness}
  The network inference problem defined by equation~\eqref{eq:maxfc2} is NP-hard.
\end{theorem}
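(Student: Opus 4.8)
The plan is to reduce a known NP-hard combinatorial problem to the maximization in Eq.~\eqref{eq:maxfc2}. The most natural candidate is \textsc{Maximum Coverage} (equivalently \textsc{Set Cover}): given a ground set $U$, a family of subsets $S_1,\dots,S_m \subseteq U$, and a budget $k$, decide whether $k$ of the subsets suffice to cover all of $U$. The reduction should exploit the fact that $F_C(G) = \sum_{c\in C} F_c(G)$ is a sum over cascades, and that each edge added to $G$ either helps a given cascade (by raising the weight of the incoming edge it supplies to some node in the maximum-weight spanning tree) or does nothing. So I would like to engineer cascades that play the role of ground-set elements and candidate edges that play the role of subsets.

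Concretely, first I would set up a gadget in which there is one ``source'' node $a$ (infected first, at time $0$ in every cascade) and a collection of ``element'' nodes, one per element of $U$. For each subset $S_j$ I would introduce a node $u_j$, with a network-edge-candidate $(a,u_j)$ that is cheap, and candidate edges $(u_j, x)$ for every element $x\in S_j$. The incubation-time distribution and the hit times in each cascade should be chosen so that: (i) in the cascade corresponding to element $x$, only edges incident to the $u_j$'s with $x\in S_j$ carry large weight, and selecting any one such $(u_j,x)$ (together with $(a,u_j)$) produces a fixed gain $\gamma$, while not selecting any produces gain $0$ (the $\varepsilon$-edge fallback, which by the monotonicity/$\varepsilon$-edge construction is always well-defined); and (ii) the budget $k$ in Eq.~\eqref{eq:maxfc2} is tuned (e.g.\ $2k$ or $k+m$ depending on how the $(a,u_j)$ edges are handled) so that an optimal $k$-edge graph corresponds exactly to choosing $k$ subsets. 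Then $F_C(\hat G) = |U|\cdot\gamma$ iff the \textsc{Maximum Coverage} instance has a solution covering everything. One has to be careful that adding extra edges beyond the ``subset-selection'' ones cannot cheat — this is handled by making every other potential edge contribute $0$ (e.g.\ pointing backward in time, hence weight $0$ by the DAG observation) or by having the budget exactly consumed by the intended structure.

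The steps, in order: (1) recall \textsc{Set Cover}/\textsc{Maximum Coverage} is NP-hard; (2) describe the gadget network, the set of cascades (one per ground element, possibly plus a few ``structural'' cascades forcing the $(a,u_j)$ edges), and the hit times; (3) compute $F_c(G)$ for each cascade $c$ as a function of which candidate edges are in $G$, using Proposition~\ref{prop:dagtree} (the maximum-weight spanning tree of a DAG is just the best incoming edge per node), showing it takes the value $\gamma$ if some relevant subset-edge is present and $0$ otherwise; (4) conclude $F_C(G) = \gamma\cdot(\text{number of covered elements})$ up to a fixed additive constant from the structural cascades; (5) argue the budget translates correctly, so deciding whether $\max_{|G|\le k} F_C(G)$ reaches the target value solves \textsc{Maximum Coverage}; (6) note the reduction is polynomial.

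The main obstacle I anticipate is the instantiation details: making the weights $w_c(i,j) = \log P_c'(i,j) - \log\varepsilon$ come out so that exactly one ``useful'' choice per element is rewarded, no partial credit leaks through alternative trees, and spurious edges genuinely contribute nothing — all while staying consistent with a single global incubation-time distribution (exponential or power-law, per Eq.~\eqref{eq:incubation}) and a single $\beta,\varepsilon$. This typically forces a careful choice of hit-time values so that time-ordering alone kills the unwanted edges (weight $0$) and the surviving edges have nearly identical weight. A cleaner alternative, if the arithmetic with a fixed $P_c(\cdot)$ proves too delicate, is to invoke the flexibility already granted in the model — the paper explicitly allows $P_c(u,v)$ to depend on the contagion $c$ — and choose per-cascade weights directly; then the reduction is essentially the standard argument that maximizing a monotone coverage-like set function under a cardinality constraint is NP-hard, and the only real work is verifying that the ``best tree = best incoming edge per node'' structure from Proposition~\ref{prop:dagtree} makes $F_C$ literally a (weighted) coverage function on the chosen edge set.
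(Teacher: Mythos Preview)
Your high-level plan---reduce from \textsc{Maximum Coverage} and exploit Proposition~\ref{prop:dagtree} so that $F_C$ becomes a coverage function on the selected edge set---is exactly the paper's strategy. Where you diverge is in the gadget, and your version has a real difficulty that the paper's avoids.

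In your construction you introduce a source $a$, subset nodes $u_j$, \emph{and} element nodes $x$, with the intention that selecting set $S_j$ corresponds to putting some bundle of edges (e.g.\ $(a,u_j)$ together with the $(u_j,x)$'s) into $G$. The problem is the edge budget: an edge $(u_j,x)$ helps only the single cascade $c_x$, so ``choosing set $S_j$'' costs $1+|S_j|$ edges rather than one, and the translation from the cardinality constraint $|G|\le k$ to ``at most $k$ sets'' breaks down. Your own suggestions of $2k$ or $k+m$ do not fix this, because the number of edges needed depends on which sets are chosen. Dropping the element nodes and keeping only $a$ and the $u_j$'s does not help either: then in cascade $c_x$ several $u_j$'s are infected and each gets its own best incoming edge, so $F_{c_x}(G)$ counts how many selected sets contain $x$ rather than whether at least one does---i.e., you get a multiplicity function, not coverage.

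The paper's gadget sidesteps all of this with one clean idea: use \emph{one} target node $r$ and \emph{no} element nodes. The vertex set is $\{1,\dots,m\}\cup\{r\}$ (one node per set, plus $r$). For element $s$, the cascade assigns time $0$ to every $i$ with $s\in S_i$, time $1$ to $r$, and $\infty$ elsewhere. By Proposition~\ref{prop:dagtree}, the only edge in the optimal tree that carries positive weight is the single incoming edge to $r$; its weight is $1$ if $G$ contains some $(i,r)$ with $s\in S_i$ and $0$ otherwise. Hence picking edge $(i,r)$ is literally picking set $S_i$, the budget $k$ translates exactly, and $F_C(G)=F_{MC}(A_G)$ on the nose.

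Your closing ``cleaner alternative'' is morally this argument, but you never isolate the crucial device---a single sink node per cascade so that only one incoming edge matters, which is precisely what turns the max-weight-tree objective into a $0/1$ coverage indicator. As for your worry about instantiating $w_c(i,j)$ with a fixed $P_c$, $\alpha$, $\beta$, $\varepsilon$: the paper addresses it with the same appeal to model flexibility you suggest, asserting that the parameters can be chosen so that $w_c(i,r)=1$ for the relevant edges and $0$ elsewhere; your instinct that this step deserves care is fair, but it is not where your construction goes wrong.
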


\begin{proof}
By reduction from the MAX-$k$-COVER problem~\cite{khuller1999budgeted}.  In
MAX-$k$-COVER, we are given a finite set $W$, $|W|=n$ and a collection of
subsets $S_{1},\dots,S_{m}\subseteq W$.  The function $$F_{MC}(A)=|\cup_{i\in
A}S_{i}|$$ counts the number of elements of $W$ covered by sets indexed by $A$.
Our goal is to pick a collection of $k$ subsets $A$ maximizing $F_{MC}$.  We
will produce a collection of $n$ cascades $C$ over a graph $G$ such that
$\max_{|G|\leq k}F_{C}(G) = \max_{|A|\leq k}F_{MC}(A)$.  Graph $G$ will be
defined over the set of vertices $V = \{1,\dots,m\}\cup\{r\}$, \emph{i.e.}, there is
one vertex for each set $S_{i}$ and one extra vertex $r$.  For each element
$s\in W$ we define a cascade which has time stamp $0$ associated with all nodes $i\in V$
such that $s\in S_{i}$, time stamp $1$ for node $r$ and $\infty$ for all
remaining nodes.

Furthermore, we can choose the transmission model such that $w_{c}(i,r)=1$
whenever $s\in S_{i}$ and $w_{c}(i',j')=0$ for all remaining edges $(i',j')$,
by choosing the parameters $\varepsilon$, $\alpha$ and $\beta$ appropriately.
Since a directed spanning tree over a graph $G$ can contain at most one edge
incoming to node $r$, its weight will be $1$ if $G$ contains any edge from a
node $i$ to $r$ where $s\in S_{i}$, and $0$ otherwise.  Thus, a graph $G$
of at most $k$ edges corresponds to a feasible solution $A_{G}$ to
MAX-$k$-COVER where we pick sets $S_{i}$ whenever edge $(i,r)\in G$, and each
solution $A$ to MAX-$k$-COVER corresponds to a feasible solution $G_{A}$ of
\eqref{eq:maxfc2}. Furthermore, by construction, $F_{MC}(A_{G})=F_{C}(G)$.
Thus, if we had an efficient algorithm for deciding whether there exists a
graph $G$, $|G| \leq k$ such that $F_{C}(G)> c$, we could use the algorithm to
decide whether there exists a solution $A$ to MAX-$k$-COVER with value at least
$c$.
\end{proof}

While finding the optimal solution is hard, we now show that $F_{C}$ satisfies
\emph{submodularity}, a natural diminishing returns property. The submodularity
property allows us to efficiently find a \emph{provably near-optimal} solution
to this otherwise NP-hard optimization problem.

A set function $F: 2^{W}\rightarrow\mathbb{R}$ that maps subsets of a finite
set $W$ to the real numbers is \emph{submodular} if for $A\subseteq B\subseteq
W$ and $s\in W\setminus B$, it holds that

$$
F(A\cup\{s\})-F(A)\geq F(B\cup\{s\})-F(B).
$$

This simply says adding $s$ to the set $A$ increases the score more than adding
$s$ to set $B$ ($A \subseteq B$).

Now we are ready to show the following result that enables us to find a near
optimal network $G$:

\begin{theorem}\label{thm:submodular}
  Let $V$ be a set of nodes, and $C$ be a collection of cascades hitting the
  nodes $V$. Then $F_{C}(G)$ is a submodular function $F_{C}:2^{W}\rightarrow
  \mathbb{R}$ defined over subsets $W\subseteq V\times V$ of directed edges.
\end{theorem}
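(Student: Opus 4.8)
The plan is to reduce the statement to a single elementary submodular building block. Since $F_C(G)=\sum_{c\in C}F_c(G)$ and a finite sum of submodular functions is submodular, I would first argue it suffices to show that each $F_c$ is submodular as a function of the network edge set $E\subseteq V\times V$. For a fixed cascade $c$, the paper has already shown (Eq.~\eqref{eq:pcasc5} together with Proposition~\ref{prop:dagtree}) that the optimal propagation tree is obtained greedily: for every non-root node $v$ of the hit set $V_c$ (which equals $V_T$ for every $T\in\mathcal{T}_c(G)$ and is determined by $c$ alone), one selects an incoming edge of maximum weight, where a pair $(u,v)$ with $t_u<t_v$ receives weight $w_c^{\mathrm{net}}(u,v)=\log\beta+\log P_c(t_v-t_u)-\log\varepsilon$ if $(u,v)\in E$ and weight $w_c^{\varepsilon}(u,v)=\log P_c(t_v-t_u)$ if $(u,v)$ is an $\varepsilon$-edge; note $w_c^{\mathrm{net}}(u,v)=w_c^{\varepsilon}(u,v)+\log(\beta/\varepsilon)$ with the gap $\log(\beta/\varepsilon)>0$ the \emph{same} for all pairs. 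Because the $\varepsilon$-only baseline $\max_{T\in\mathcal{T}_c(\bar{K})}\log P(c|T)$ in Eq.~\eqref{eq:pcasc3} and the $V_c$-dependent constants coming from the $(1-\varepsilon)$ factors in Eq.~\eqref{eq:probcasc3} do not depend on $E$, this gives $F_c(G)=\sum_{v\in V_c\setminus\{\mathrm{root}\}} g^c_v(E) + \mathrm{const}$, where $g^c_v(E)=\max_{u:\, t_u<t_v} w(u,v;E)$ is the weight of the best edge into $v$.

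The crux is to put $g^c_v$ into a manifestly submodular form. Here I would use that $\beta>\varepsilon$: on any fixed ordered pair a network edge dominates its $\varepsilon$-counterpart, so for every $u$ with $(u,v)\in E$ the value $w_c^{\varepsilon}(u,v)$ can be dropped from the maximum without changing it, yielding
$$ g^c_v(E)=\max\Big(\ \{\Gamma^c_v\}\ \cup\ \{\, w_c^{\mathrm{net}}(u,v)\ :\ (u,v)\in E\,\}\ \Big),\qquad \Gamma^c_v:=\max_{u:\, t_u<t_v} w_c^{\varepsilon}(u,v), $$
in which $\Gamma^c_v$ is a fixed ``floor'' independent of $E$ and $g^c_v$ depends on $E$ only through the edges incoming to $v$. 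I would then invoke the elementary fact that any set function of the form $S\mapsto\max\big(\{b\}\cup\{a_e:e\in S\}\big)$ is monotone and submodular: the marginal gain of adding an edge $e=(u,v)$ equals $\big[\,w_c^{\mathrm{net}}(u,v)-g^c_v(E)\,\big]^{+}$ (and equals $0$ when the head of $e$ is not $v$), and this is non-increasing in $E$ precisely because $g^c_v$ is monotone — which is exactly the diminishing-returns inequality $g^c_v(A\cup\{e\})-g^c_v(A)\ge g^c_v(B\cup\{e\})-g^c_v(B)$ for $A\subseteq B$. Hence each $g^c_v$ is submodular over all of $2^{V\times V}$, and therefore so are $F_c=\sum_v g^c_v+\mathrm{const}$ and $F_C=\sum_{c}F_c$.

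The conceptual content is thus entirely in the last elementary lemma; the part that needs care is the bookkeeping that turns $F_c(G)$ into $\sum_v g^c_v(E)$ plus a constant. Concretely I would have to (i) verify that the two families of additive terms dropped along the way — the empty-graph term of Eq.~\eqref{eq:pcasc3} and the node-set constant from the $(1-\varepsilon)$ factors in Eq.~\eqref{eq:probcasc3} — are genuinely independent of $E$, so that removing them preserves submodularity; (ii) handle the root (and ties in hit times, which make equal-time edges carry probability $0$ and hence irrelevant), noting that the root is fixed by $c$ and never interacts with $E$; and (iii) make the ``discard the dominated $\varepsilon$-term'' step precise, which works exactly because the multiplicative gap between $\beta$ and $\varepsilon$ is uniform over edges, so the $\varepsilon$-contributions really do collapse to a single $E$-independent floor $\Gamma^c_v$. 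None of these is hard, but they are where a careless argument could slip.
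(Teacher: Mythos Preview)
Your proposal is correct and follows essentially the same route as the paper: reduce to per-cascade submodularity, use Proposition~\ref{prop:dagtree} to write $F_c$ (up to an $E$-independent constant) as a sum over hit nodes $v$ of the best incoming-edge weight $g^c_v(E)$, and then observe that the marginal gain of adding $(r,s)$ is $[\,w_c^{\mathrm{net}}(r,s)-g^c_s(E)\,]^{+}$, which is non-increasing in $E$. The paper's proof is simply the terse version of this same computation --- it writes the marginal gain directly as $\max(w_{i,s},w_{r,s})-w_{i,s}$ and compares across $G\subseteq G'$ without explicitly naming the ``max-with-floor'' lemma or isolating the floor $\Gamma^c_v$.
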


\begin{proof}
Fix a cascade $c$, graphs $G\subseteq G'$ and an edge $e=(r,s)$ not contained
in $G'$.  We will show that $F_{c}(G\cup\{e\})-F_{c}(G)\geq
F_{c}(G'\cup\{e\})-F_{c}(G')$.  Since nonnegative linear combinations of
submodular functions are submodular, the function $F_{C}(G)=\sum_{c\in
C}F_{c}(G)$ is submodular as well.  Let $w_{i,j}$ be the weight of edge $(i,j)$
in $G\cup\{e\}$, and $w'_{i,j}$ be the weight in $G'\cup\{e\}$.  As argued
before, the maximum weight directed spanning tree for DAGs is obtained by
assigning to each node the incoming edge with maximum weight.  Let $(i,s)$ be
the edge incoming at $s$ of maximum weight in $G$, and $(i',s)$ the maximum
weight incoming edge in $G'$.  Since $G\subseteq G'$ it holds that $w_{i,s}\leq
w'_{i',s}$.  Furthermore, $w_{r,s}=w'_{r,s}$. Hence,
\begin{align*}
  F_{c}(G\cup\{(r,s)\})-F_{c}(G)&=\max(w_{i,s},w_{r,s})-w_{i,s}\\
  &\geq \max(w'_{i',s},w'_{r,s})-w'_{i',s}\\
  &=F_{c}(G'\cup\{(r,s)\})-F_{c}(G'),
\end{align*}
proving submodularity of $F_{c}$.
\end{proof}

Maximizing submodular functions in general is NP-hard~\cite{khuller1999budgeted}. A
commonly used heuristic is the \emph{greedy algorithm}, which starts with an empty
graph $\bar{K}$, and iteratively, in step $i$, adds the edge $e_i$ which maximizes
the marginal gain:

\begin{equation}
  e_i = \argmax_{e \in G \backslash G_{i-1}} F_C(G_{i-1} \cup \{e\}) - F_C(G_{i-1}).
  \label{eq:pmarginal}
\end{equation}

The algorithm stops once it has selected $\nedge$ edges, and returns the
solution $\hat{G}=\{e_{1},\dots,e_{\nedge}\}$. The stopping criteria,
\emph{i.e.}, value of $\nedge$, can be based on some threshold of the marginal
gain, of the number of estimated edges or another more sophisticated heuristic.

In our context we can think about the greedy algorithm as starting on an empty
graph $\bar{K}$ with no network edges. In each iteration $i$, the algorithm adds to $G$
the edge $e_i$ that currently improves the most the value of the log-likelihood. Another way to 
view the greedy algorithm is that it starts on a fully connected graph $\bar{K}$ where all 
the edges are $\varepsilon$-edges. Then adding an edge to graph $G$ corresponds to 
that edge changing the type from $\varepsilon$-edge to a network edge. Thus our algorithm 
iteratively swaps $\varepsilon$-edges to network edges until $k$ network edges have been 
swapped (\ie, inserted into the network $G$).

\xhdr{Guarantees on the solution quality}
Considering the NP-hardness of the problem, we might expect the greedy
algorithm to perform arbitrarily bad. However, we will see that this is not the
case. A fundamental result of Nemhauser et al.~\cite{nemhauser1978analysis}
proves that for monotonic submodular functions, the set $\hat{G}$ returned by
the greedy algorithm obtains at least a constant fraction of $(1-1/e)\approx
63\%$ of the optimal value achievable using $\nedge$ edges.

Moreover, we can acquire a tight \emph{online} data-dependent bound on the
solution quality:

\begin{theorem}[\cite{leskovec2007cost}]
  For a graph $\hat G$, and each edge $e \notin \hat G$, let $\delta_{e} = F_C(\hat G \cup \{e\}) - F_C(\hat G)$. Let $e_1, \ldots e_B$
  be the sequence with $\delta_{e}$ in decreasing order, where $B$ is the total number of edges with marginal gain greater than $0$. 
  Then,
  $$
  \max_{|G|\leq \nedge} F_c(G) \leq F_c(\hat G) + \sum_{i=1}^{\nedge} \delta_{e_i}.
  $$
\label{th:online_bound}
\end{theorem}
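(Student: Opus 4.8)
The plan is to establish this online bound, due to Leskovec et al.\ (the ``CELF'' bound), by exploiting only monotonicity and submodularity of $F_C$, both of which have already been proven (Theorem~\ref{thm:submodular} and the monotonicity remark preceding Problem~2). Let $G^\star$ be an optimal solution with $|G^\star| \leq \nedge$, and let $\hat G$ be the greedy solution. First I would write $G^\star \setminus \hat G = \{a_1, \ldots, a_p\}$ with $p \leq \nedge$, and telescope:
\begin{equation*}
F_C(G^\star) \leq F_C(G^\star \cup \hat G) = F_C(\hat G) + \sum_{j=1}^{p} \Big[ F_C\big(\hat G \cup \{a_1,\ldots,a_j\}\big) - F_C\big(\hat G \cup \{a_1,\ldots,a_{j-1}\}\big) \Big],
\end{equation*}
where the first inequality is monotonicity. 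Then, by submodularity applied to each term (the set $\hat G \cup \{a_1,\ldots,a_{j-1}\}$ contains $\hat G$), each marginal gain $F_C(\hat G \cup \{a_1,\ldots,a_j\}) - F_C(\hat G \cup \{a_1,\ldots,a_{j-1}\})$ is at most $F_C(\hat G \cup \{a_j\}) - F_C(\hat G) = \delta_{a_j}$.

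Next I would combine these observations: $F_C(G^\star) \leq F_C(\hat G) + \sum_{j=1}^{p} \delta_{a_j}$. Since $p \leq \nedge$ and each $a_j$ is an edge not in $\hat G$, the sum $\sum_{j=1}^p \delta_{a_j}$ is a sum of at most $\nedge$ of the quantities $\delta_e$; it is therefore bounded above by the sum of the $\nedge$ \emph{largest} such quantities, which (after discarding any nonpositive $\delta_e$, which only helps) is exactly $\sum_{i=1}^{\nedge}\delta_{e_i}$ in the notation of the statement, where $e_1,\ldots,e_B$ list the positive-gain edges in decreasing order of $\delta_e$. One minor point to handle cleanly: if fewer than $\nedge$ edges have positive marginal gain (i.e.\ $B < \nedge$), the sum is taken over the available $B$ terms, and the bound still holds because the remaining $a_j$ contribute nonpositive amounts to $F_C(G^\star) - F_C(\hat G)$. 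Chaining gives $\max_{|G|\leq\nedge} F_c(G) = F_C(G^\star) \leq F_C(\hat G) + \sum_{i=1}^{\nedge}\delta_{e_i}$, as claimed.

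The argument is essentially routine once monotonicity and submodularity are in hand, so there is no deep obstacle; the only thing requiring a little care is the bookkeeping in the last step — namely arguing that replacing the (unknown) set $G^\star \setminus \hat G$ by the top-$\nedge$ list of $\delta_e$ values is a valid upper bound, and correctly handling the edge cases where $|G^\star \setminus \hat G| < \nedge$ or $B < \nedge$. I would state these explicitly but not belabor them. Note also that this bound is computed entirely \emph{after} running greedy (it uses $\hat G$ and the $\delta_e$'s evaluated at $\hat G$), which is what makes it a useful a posteriori certificate that is typically far tighter in practice than the worst-case $(1-1/e)$ guarantee.
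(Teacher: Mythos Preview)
Your argument is correct and is exactly the standard proof of this data-dependent (CELF) bound: monotonicity to pass from $G^\star$ to $G^\star\cup\hat G$, a telescoping sum, submodularity to bound each increment by $\delta_{a_j}$, and then replacing the unknown set $\{a_j\}$ by the top-$\nedge$ marginals. The edge cases you flag ($p<\nedge$, $B<\nedge$) are handled correctly.

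One remark on scope: the paper does not actually prove this theorem; it is quoted from \cite{leskovec2007cost} and stated without proof. So there is no ``paper's own proof'' to compare against here --- your write-up is essentially the proof one finds in that reference. A tiny quibble: you write ``let $\hat G$ be the greedy solution,'' but the theorem (and your proof) hold for \emph{any} $\hat G$; nothing in your argument uses how $\hat G$ was produced, so you may as well drop that restriction to match the statement.
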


Theorem~\ref{th:online_bound} computes how far a given $\hat G$ (obtained by \emph{any}
algorithm) is from the unknown NP-hard to find optimum.

\xhdr{Speeding-up the \netinf algorithm} To make the algorithm scale to
networks with thousands of nodes we speed-up the algorithm by several orders of
magnitude by considering two following two improvements:

\noindent{{\em Localized update:}} Let $C_i$ be the subset of cascades that go through the
node $i$ (\ie, cascades in which node $i$ is infected). Then consider that in some step $n$ 
the \emph{greedy algorithm} selects the network edge $(j,i)$ with marginal gain $\delta_{j, i}$,
and now we have to update the optimal tree of each cascade. We make a simple observation
that adding the network edge $(j, i)$ may only change the optimal trees of the cascades in the set $C_i$ 
and thus we only need to revisit (and potentially update) the trees of cascades in $C_i$. Since cascades
are local (\ie, each cascade hits only a relatively small subset of the network), this localized 
updating procedure  speeds up the algorithm considerably.

\noindent{{\em Lazy evaluation:}} It can be used to drastically reduce the number of
evaluations of marginal gains $F_{C}(G\cup\{e\})-F_{C}(G)$~\cite{leskovec2007cost}.
This procedure relies on the submodularity of $F_{C}$. The key idea behind lazy evaluations 
is the following. Suppose $G_{1},\dots,G_{k}$ is the sequence of graphs produced during iterations 
of the greedy algorithm. Now let us consider the marginal gain $$\Delta_{e}(G_{i})=F_{C}(G_{i}\cup\{e\})-F_{C}(G_{i})$$ 
of adding some edge $e$ to any of these graphs. Due to the submodularity of $F_{C}$ it holds that 
$\Delta_{e}(G_{i})\geq \Delta_{e}(G_{j})$ whenever $i\leq j$. Thus, the marginal gains of $e$ can 
only monotonically decrease over the course of the greedy algorithm. This means that elements which 
achieve very little marginal gain at iteration $i$ cannot suddenly produce large marginal gain at 
subsequent iterations.  This insight can be exploited by maintaining a priority queue data structure 
over the edges and their respective marginal gains.  At each iteration, the greedy algorithm retrieves 
the highest weight (priority) edge.  Since its value may have decreased from previous iterations, it 
recomputes its marginal benefit.  If the marginal gain remains the same after recomputation, it has to be 
the edge with highest marginal gain, and the greedy algorithm will pick it.  If it decreases, one reinserts 
the edge with its new weight into the priority queue and continues. Formal details and pseudo-code can be 
found in \cite{leskovec2007cost}.

As we will show later, these two improvements decrease the run time by several
orders of magnitude with {\em no loss} in the solution quality. We call the
algorithm that implements the greedy algorithm on this alternative formulation
with the above speedups the \netinf algorithm (Algorithm~\ref{alg:netinf}). In
addition, \netinf nicely lends itself to parallelization as likelihoods of
individual cascades and likelihood improvements of individual new edges can
simply be computed independently. This allows us to to tackle even bigger
networks in shorter amounts of time.

A space and runtime complexity analysis of \netinf depends heavily of the structure of the network, and therefore 
it is necessary to make strong assumptions on the structure. Due to this, it is out of the scope of the paper to include 
a formal complexity analysis. Instead, we include an empirical runtime analysis in the following section.

\begin{algorithm}[t]
\caption{The \netinf Algorithm\label{alg:netinf}}
\begin{algorithmic}
  \REQUIRE Cascades and hit times $C=\{(c, \bt_c)\}$, number of edges $\nedge$

  \STATE $G \leftarrow \bar{K}$
  \FORALL{$c \in C$}
    \STATE $T_c \leftarrow dag\_tree(c)$ \hfill \COMMENT {Find most likely tree (Algorithm~\ref{alg:dagtree})}
  \ENDFOR

  \WHILE{$|G| < \nedge$}
    \FORALL{$(j,i) \notin G$}
      \STATE $\delta_{j,i} = 0$ \hfill \COMMENT{Marginal improvement of adding edge $(j,i)$ to G}
      \STATE $M_{j,i} \leftarrow \emptyset$
      \FORALL{$c : t_j < t_i$ in $c$} 
        \STATE Let $w_c(m,n)$ be the weight of $(m,n)$ in $G \cup \{(j,i)\}$
        \IF{$w_c(j,i) \geq w_c(Par_{T_c}(i), i)$}
          \STATE $\delta_{j,i} = \delta_{j,i} + w_c(j,i)- w_c(Par_{T_c}(i), i)$
          \STATE $M_{j,i} \leftarrow M_{j,i} \cup \{c\}$
        \ENDIF
      \ENDFOR
    \ENDFOR

    \STATE $(j^{*},i^{*}) \leftarrow \arg \max_{(j,i) \in C \backslash G} \delta_{j,i}$
    \STATE $G \leftarrow G \cup \{(j^{*},i^{*})\}$
    \FORALL{$c \in M_{j^{*},i^{*}}$}
      \STATE $Par_{T_c}(i^{*}) \leftarrow j^{*}$
    \ENDFOR
  \ENDWHILE

\RETURN G;
\end{algorithmic}
\end{algorithm}

\section{Experimental evaluation}
\label{sec:evaluation}
In this section we proceed with the experimental evaluation of our proposed \netinf
algorithm for inferring network of diffusion. We analyze the performance of
\netinf on synthetic and real networks. We show that our algorithm performs
surprisingly well, outperforms a heuristic baseline and correctly discovers
more than 90\% of the edges of a typical diffusion network.

\subsection{Experiments on synthetic data}

The goal of the experiments on synthetic data is to understand how the
underlying network structure and the propagation model (exponential and
power-law) affect the performance of our algorithm. The second goal is to
evaluate the effect of simplification we had to make in order to arrive to an
efficient network inference algorithm. Namely, we assume the contagion
propagates in a tree pattern $T$ (\ie, exactly $E_T$ edges caused the
propagation), consider only the most likely tree $T$ (Eq.~\ref{eq:pcasc2}), and treat 
non-propagating network edges as $\varepsilon$-edges (Eq.~\ref{eq:probcasc3}).
\begin{figure}
\centering
  \subfigure[FF: Cascades per edge]{\includegraphics[width=0.48\textwidth]{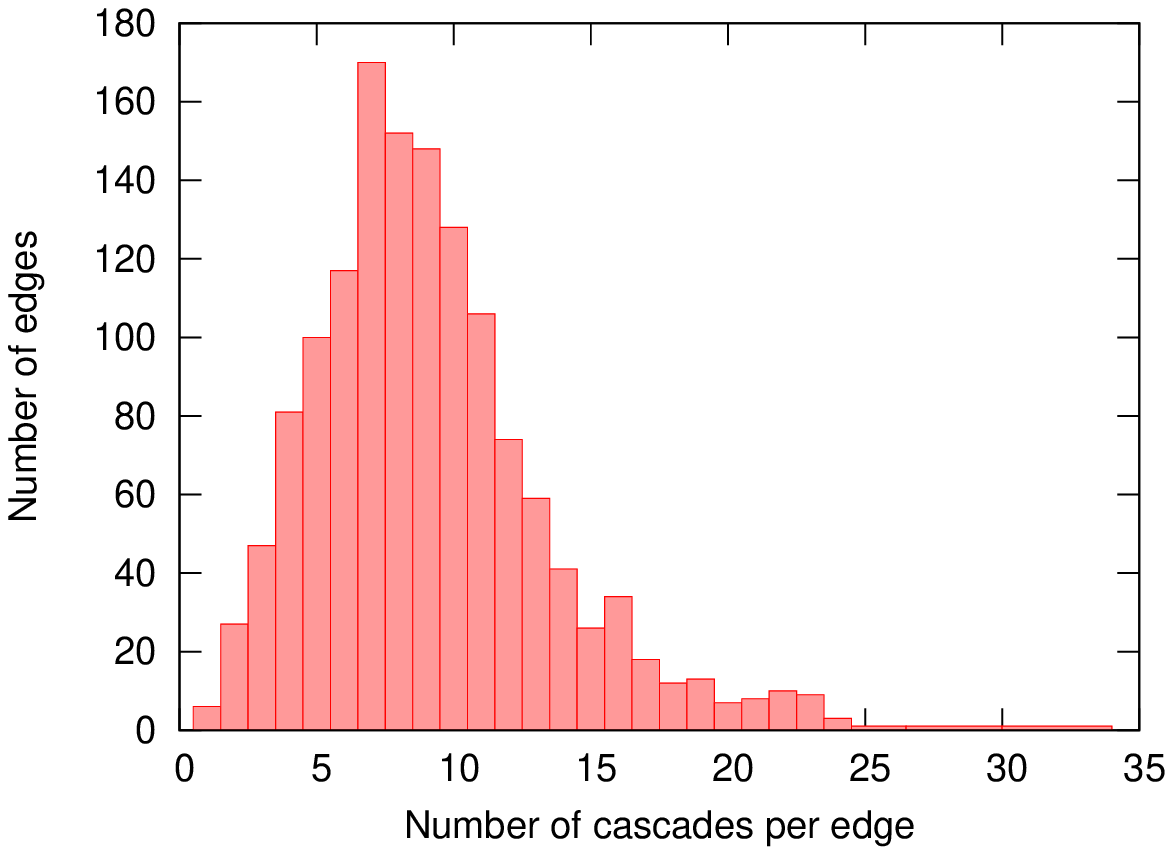} \label{fig:NumCascadesPerEdge}}
  \subfigure[FF: Cascade size]{\includegraphics[width=0.48\textwidth]{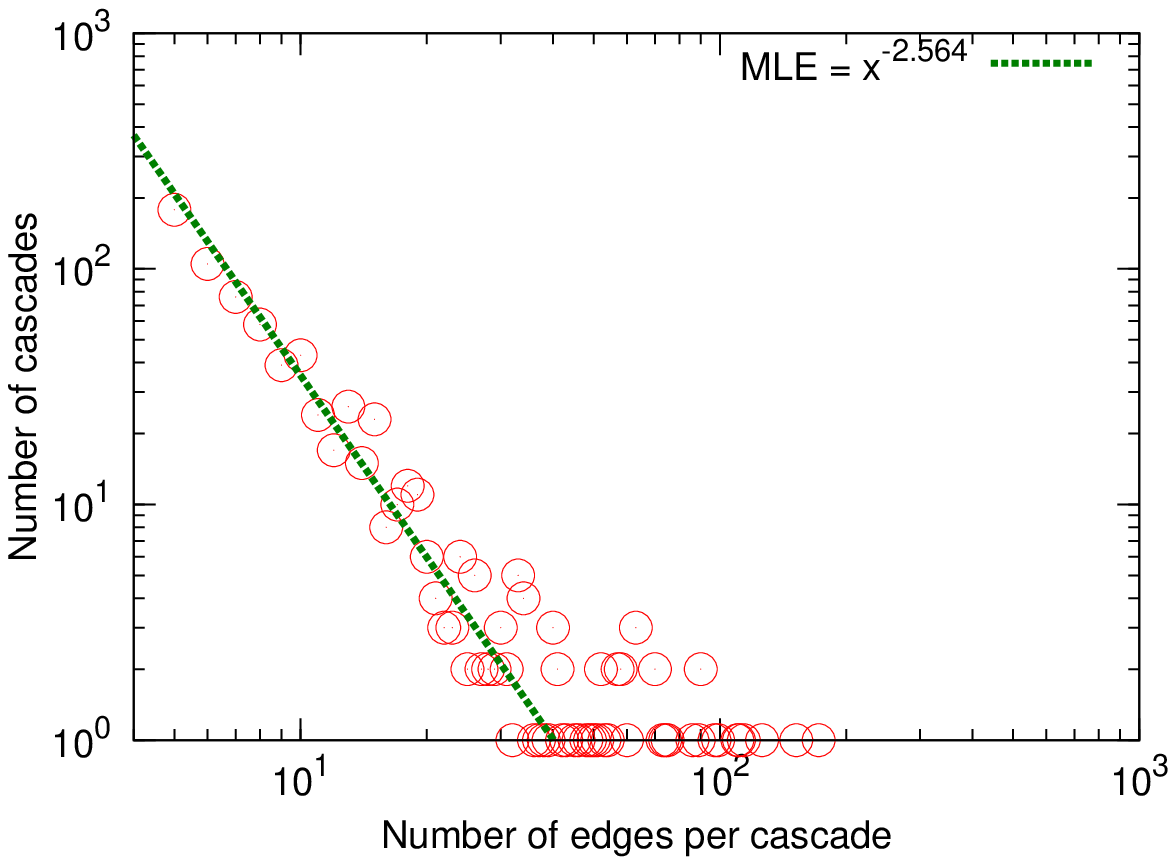} \label{fig:SizeCascades}}
  \caption{Number of cascades per edge and cascade sizes for a Forest Fire network ($1,024$ nodes, $1,477$ edges) with forward burning
  probability $0.20$, backward burning probability $0.17$ and exponential incubation time model with parameter $\alpha=1$ and propagation 
  probability $\beta=0.5$. The cascade size distribution follows a power-law. We found the power-law coefficient using maximum likelihood estimation 
  (MLE).
}
\label{fig:StatsCascades}
\end{figure}

In general, in all our experiments we proceed as follows: We are given a true
diffusion network $G^*$, and then we simulate the propagation of a set of
contagions $c$ over the network $G^*$. Diffusion of each contagion creates a
cascade and for each cascade, we record the node hit times $t_u$. Then, given
these node hit times, we aim to recover the network $G^*$ using the \netinf
algorithm. For example, Figure~\ref{fig:FFN20}(a) shows a graph $G^*$ of 20
nodes and 23 directed edges. Using the exponential incubation time model and
$\beta=0.2$ we generated $24$ cascades. Now given the node infection times, we
aim to recover $G^*$. A baseline method (b) (described below) performed poorly
while \netinf (c) recovered $G^*$ almost perfectly by making only two errors
(red edges).

\xhdr{Experimental setup} Our experimental methodology is composed of the
following steps:

\begin{enumerate}
  \item Ground truth graph $G^*$
  \item Cascade generation: Probability of propagation $\beta$, and the
      incubation time model with parameter $\alpha$.
  \item Number of cascades
\end{enumerate}

\noindent{\em (1) Ground truth graph $G^*$:} We consider two models of directed
real-world networks to generate $G^*$, namely, the Forest Fire model~\cite{leskovec2005graphs}
and the Kronecker Graphs model~\cite{leskovec2007scalable}.
For Kronecker graphs, we consider three sets of parameters that produce
networks with a very different global network structure: a random
graph~\cite{erdos60random} (Kronecker parameter matrix $[0.5, 0.5; 0.5, 0.5]$), a core-periphery 
network~\cite{jure08ncp} ($[0.962, 0.535; 0.535, 0.107]$) and a network with hierarchical community 
structure~\cite{clauset08hierarchical} ($[0.962, 0.107; 0.107, 0.962]$). The Forest Fire generates networks 
with power-law degree distributions that follow the densification power
law~\cite{barabasi99emergence,jure07evolution}.

\noindent{\em (2) Cascade propagation:} We then simulate cascades on $G^*$
using the generative model defined in Section~\ref{sec:problem}. For the
simulation we need to choose the incubation time model (\ie, power-law or
exponential and parameter $\alpha$). We also need to fix the parameter $\beta$, that
controls probability of a cascade propagating over an edge. Intuitively,
$\alpha$ controls how fast the cascade spreads (\ie, how long the incubation
times are), while $\beta$ controls the size of the cascades. Large $\beta$
means cascades will likely be large, while small $\beta$ makes most of
the edges fail to transmit the contagion which results in small infections.

\noindent{\em (3) Number of cascades:} Intuitively, the more data our algorithm
gets the more accurately it should infer $G^*$. To quantify the amount of data
(number of different cascades) we define $E_l$ to be the set of edges that
participate in at least $l$ cascades. This means $E_l$ is a set of edges that
transmitted at least $l$ contagions. It is important to note that if an edge of
$G^*$ did not participate in any cascade (\ie, it never transmitted a
contagion) then there is no trace of it in our data and thus we have no chance
to infer it. In our experiments we choose the minimal amount of data (\ie,
$l=1$) so that we at least in principle could infer the true network $G^*$.
Thus, we generate as many cascades as needed to have a set $E_l$ that contains
a fraction $f$ of all the edges of the true network $G^*$. In all our
experiments we pick cascade starting nodes uniformly at random and generate
enough cascades so that 99\% of the edges in $G^*$ participate in at least one
cascade, \emph{i.e.}, 99\% of the edges are included in $E_1$.

Table~\ref{tab:NumCascades} shows experimental values of number of cascades
that let $E_1$ cover different percentages of the edges. To have a closer look
at the cascade size distribution, for a Forest Fire network on 1,024 nodes and
1,477 edges, we generated 4,038 cascades. The majority of edges took part in 4
to 12 cascades and the cascade size distribution follows a power law
(Figure~\ref{fig:SizeCascades}). The average and median number of cascades per
edge are 9.1 and 8, respectively (Figure~\ref{fig:NumCascadesPerEdge}).

\begin{table}[t]
    \small
    \centering
    \begin{tabular}{l c c c c r}
       \textbf{Type of network} & \textbf{$\bf f$} & \textbf{$\bf |C|$} & \textbf{$\bf r$} & \textbf{BEP} & \textbf{AUC} \\
       \hline
        \multirow{5}{*}{Forest Fire} & 0.5 & 388 & 2,898 & 0.393 & 0.29\\
        & 0.9 & 2,017 & 14,027 & 0.75 & 0.67\\
        & 0.95 & 2,717 & 19,418 & 0.82 & 0.74\\
        & 0.99 & 4,038 & 28,663 & 0.92 & 0.86\\
        \hline
        \multirow{5}{*}{Hierarchical Kronecker} & 0.5 & 289 & 1,341 & 0.37 & 0.30\\
        & 0.9 & 1,209 & 5,502 & 0.81 & 0.80\\
        & 0.95 & 1,972 & 9,391 & 0.90 & 0.90\\
        & 0.99 & 5,078 & 25,643 & 0.98 & 0.98\\
        \hline
        \multirow{5}{*}{Core-periphery Kronecker} & 0.5 & 140 & 1,392 & 0.31 & 0.23\\
        & 0.9 & 884 & 9,498 & 0.84 & 0.80\\
        & 0.95 & 1,506 & 14,125 & 0.93 & 0.91\\
        & 0.99 & 3,110 & 30,453 & 0.98 & 0.96\\
        \hline
        \multirow{5}{*}{Flat Kronecker} & 0.5 & 200 & 1,324 & 0.34 & 0.26\\
        & 0.9 & 1,303 & 7,707 & 0.84 & 0.83\\
        & 0.95 & 1,704 & 9,749 & 0.89 & 0.88\\
        & 0.99 & 3,652 & 21,153 & 0.97 & 0.97
    \end{tabular}
    \caption{Performance of synthetic data. Break-even Point (BEP) and Receiver Operating Characteristic (AUC)
    when we generated the minimum number of $|C|$ cascades so that $f$-fraction
    of edges participated in at least one cascades $|E_l|\ge f|E|$. These $|C|$
    cascades generated the total of $r$ edge transmissions, \ie, average
    cascade size is $r/|C|$. All networks have 1,024 nodes and 1,446 edges. We use the exponential incubation 
    time model with parameter $\alpha=1$, and in each case we set the probability $\beta$ such that $r/|C|$ is
    neither too small nor too large (\ie, $\beta \in (0.1, 0.6)$).}
    \label{tab:NumCascades}
\end{table}

\xhdr{Baseline method} To infer a diffusion network $\hat{G}$, we consider the
a simple baseline heuristic where we compute the score of each edge and then
pick $\nedge$ edges with highest score.

More precisely, for each \emph{possible} edge $(u,v)$ of $G$, we compute
$w(u,v) = \sum_{c \in C} P_c(u,v)$, \emph{i.e.}, overall how likely were the
cascades $c \in C$ to propagate over the edge $(u,v)$. Then we simply pick the
$\nedge$ edges $(u,v)$ with the highest score $w(u,v)$ to obtain $\hat{G}$. For
example, Figure~\ref{fig:FFN20}(b) shows the results of the baseline method on
a small graph.

\xhdr{Solution quality} We evaluate the performance of the \netinf algorithm in
two different ways. First, we are interested in how successful \netinf is at
optimizing the objective function $F_C(G)$ that is NP-hard to optimize exactly.
Using the online bound in Theorem~\ref{th:online_bound}, we can assess at most
how far from the unknown optimal the \netinf solution is in terms of the
log-likelihood score. Second, we also evaluate the \netinf based on accuracy,
\ie, what fraction of edges of $G^*$ \netinf managed to infer correctly.

Figure~\ref{fig:ScoreKROH1000NGraph} plots the value of the log-likelihood
improvement $F_C(G)$ as a function of the number of edges in $G$. In red we
plot the value achieved by \netinf and in green the upper bound using
Theorem~\ref{th:online_bound}. The plot shows that the value of the unknown
optimal solution (that is NP-hard to compute exactly) is somewhere between the
red and the green curve. Notice that the band between two curves, the optimal
and the \netinf curve, is narrow. For example, at 2,000 edges in $\hat{G}$,
\netinf finds the solution that is least 97\% of the optimal graph. Moreover,
also notice a strong diminishing return effect. The value of the objective
function flattens out after about 1,000 edges. This means that, in practice,
very sparse solutions (almost tree-like diffusion graphs) already achieve very
high values of the objective function close to the optimal.

\begin{figure}[t]
  \centering
  \subfigure[Kronecker network]{\includegraphics[width=0.8\textwidth]{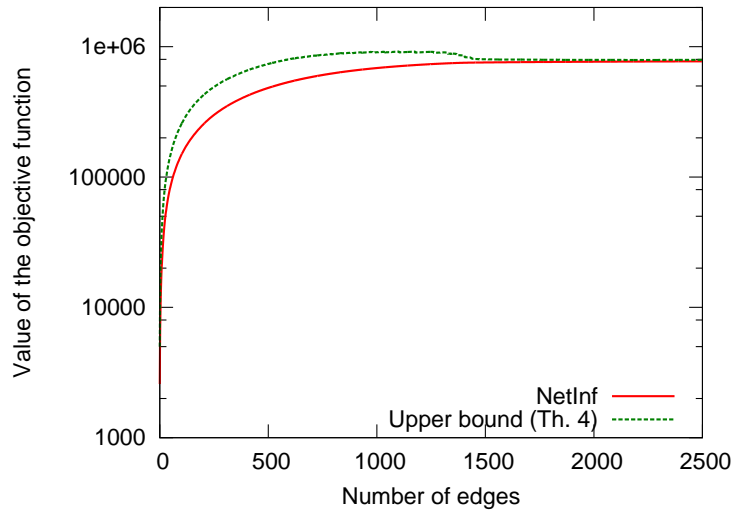} \label{fig:ScoreKROH1000NGraph}} \\
  \subfigure[Real MemeTracker data]{\includegraphics[width=0.8\textwidth]{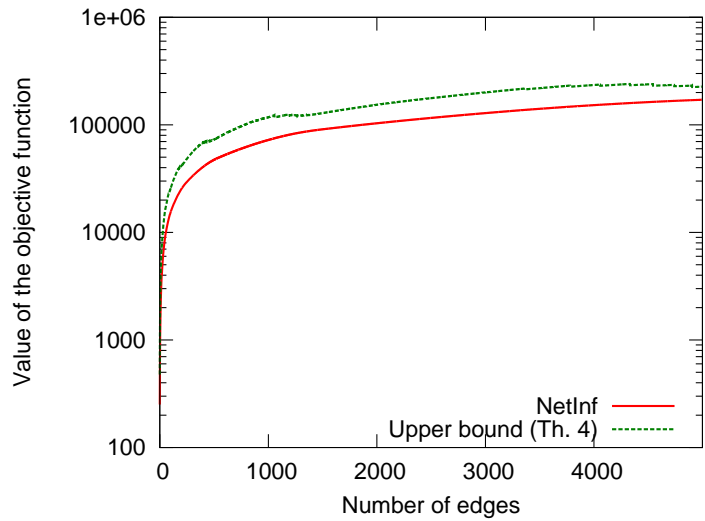} \label{fig:ScoreMemeTracker5000QOTOP1000EXPUNIF01NOTLONG100YEAR}}
  \caption{Score achieved by \netinf in comparison with the online upper bound
  from Theorem~\ref{th:online_bound}. In practice \netinf finds networks that
  are at 97\% of NP-hard to compute optimal.}
  \label{fig:Scores}
\end{figure}

\xhdr{Accuracy of \netinf} We also evaluate our approach by studying how many
edges inferred by \netinf are actually present in the true network $G^*$. We
measure the precision and recall of our method. For every value of $k$ ($1\le k
\le n(n-1)$) we generate $\hat{G_k}$ on $k$ edges by using \netinf or the
baseline method. We then compute precision (which fraction of edges in
$\hat{G_k}$ is also present $G^*$) and recall (which fraction of edges of $G^*$
appears in $\hat{G_k}$). For small $k$, we expect low recall and high precision
as we select the few edges that we are the most confident in. As $k$ increases,
precision will generally start to drop but the recall will increase.

\begin{figure*}[!h]
  \centering
  \subfigure[Hier.~Kronecker (Exp)]{\includegraphics[width=0.40\textwidth]{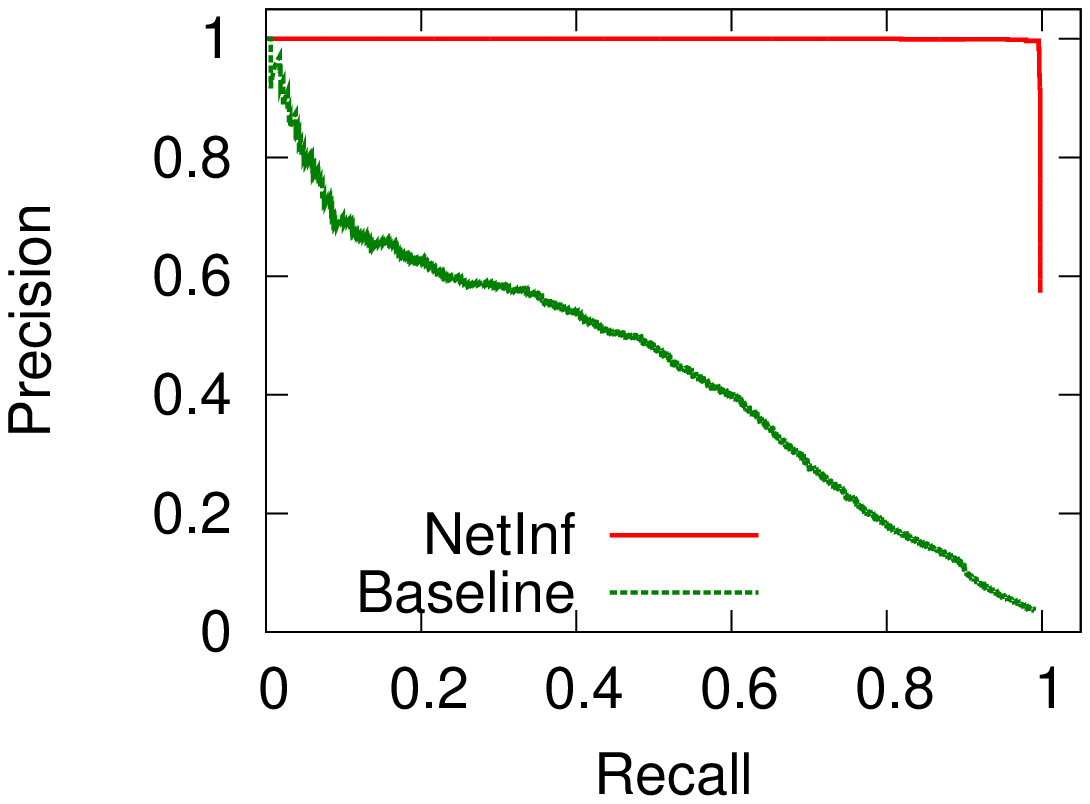} \label{fig:KROH1000NGraphExp}}
  \subfigure[Core-Periph. Kronecker (Exp)]{\includegraphics[width=0.40\textwidth]{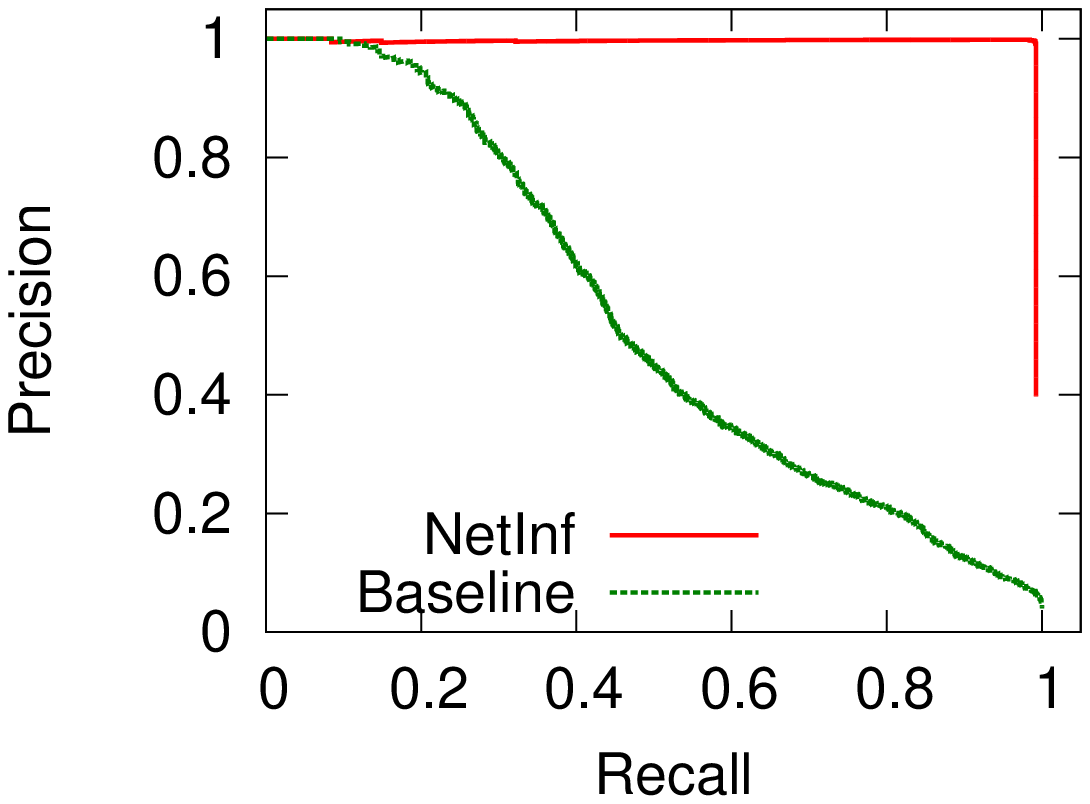} \label{fig:KROCP1000NGraphExp}}\\
  \subfigure[Flat Kronecker (Exp)]{\includegraphics[width=0.40\textwidth]{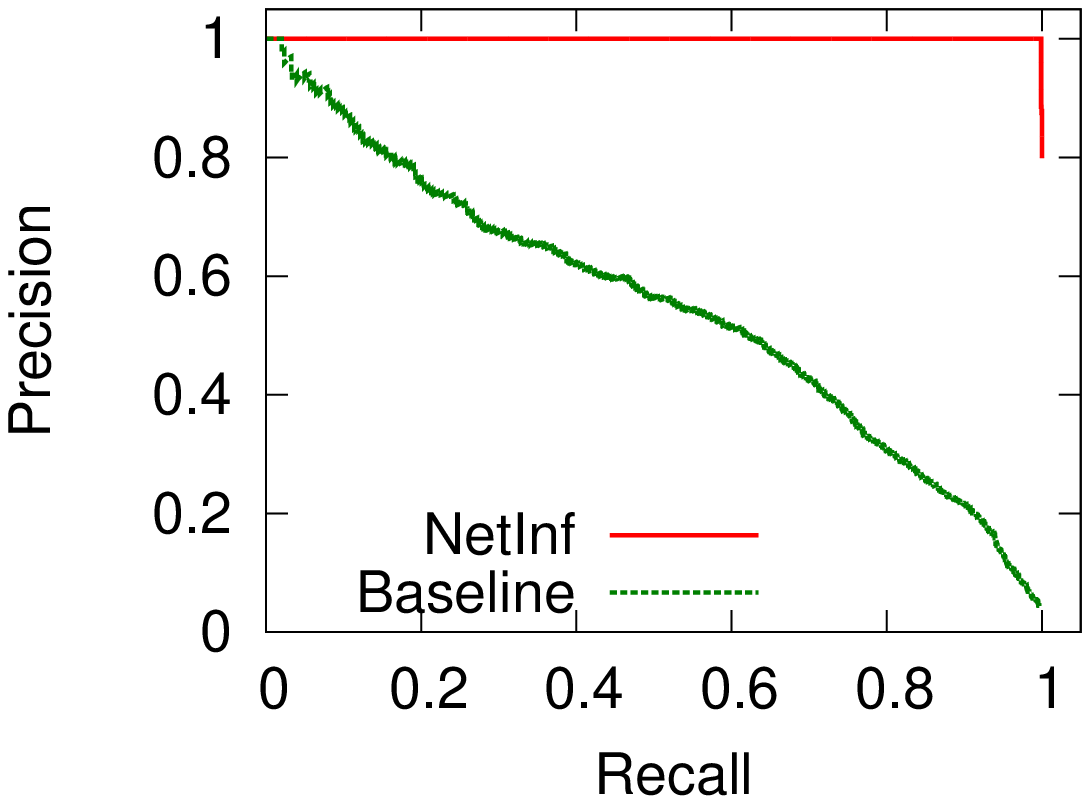} \label{fig:KROF1000NGraphExp}}
  \subfigure[Hier.~Kronecker (PL)]{\includegraphics[width=0.40\textwidth]{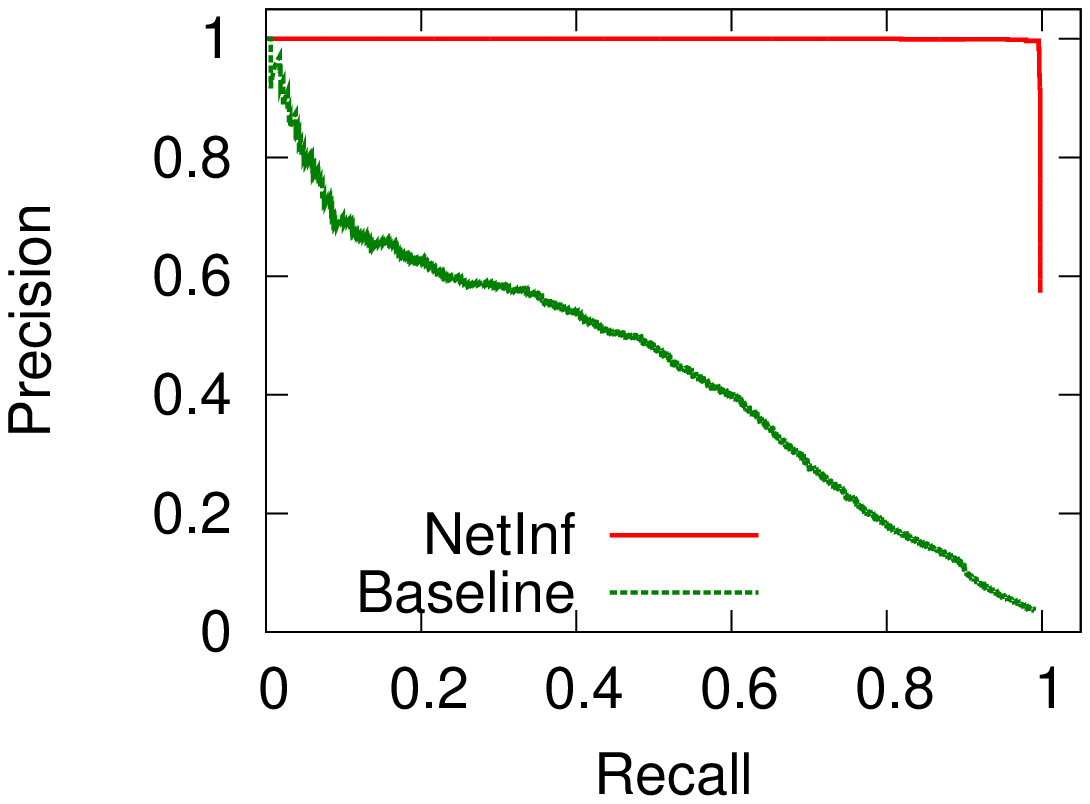}\label{fig:KROH1000NPLGraph}}\\
  \subfigure[Core-Periph. Kronecker (PL)]{\includegraphics[width=0.40\textwidth]{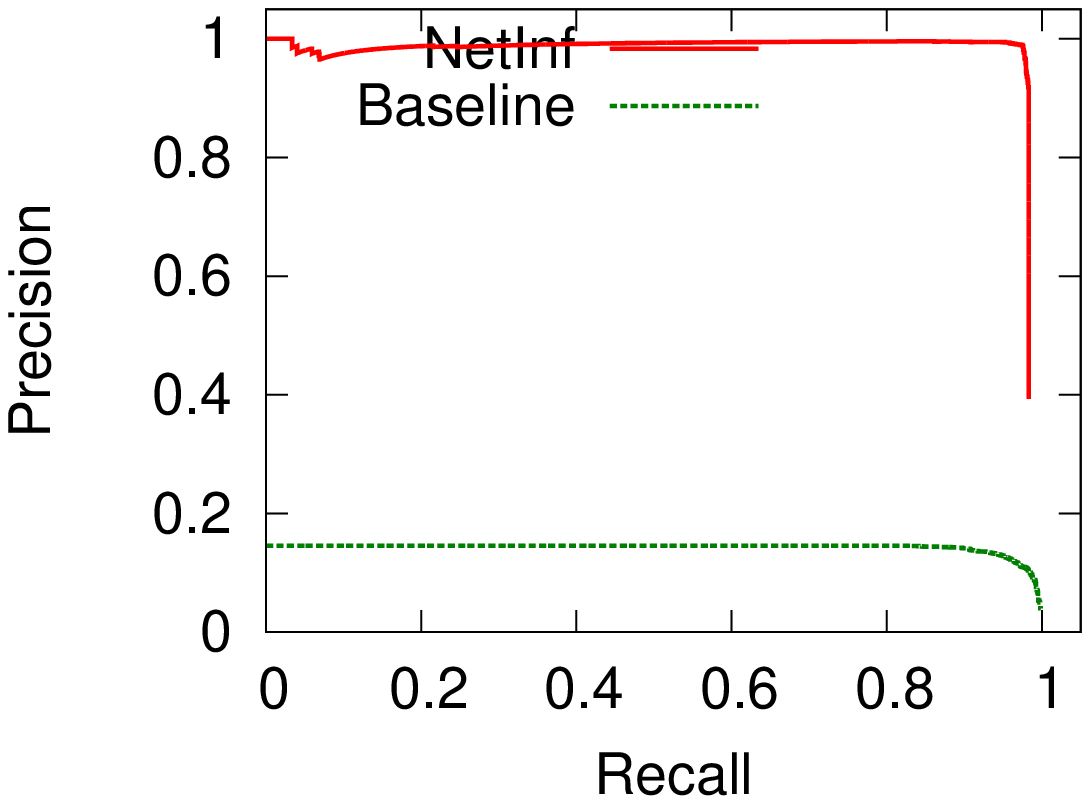}\label{fig:KROCP1000NPLGraph}}
  \subfigure[Flat Kronecker (PL)]{\includegraphics[width=0.40\textwidth]{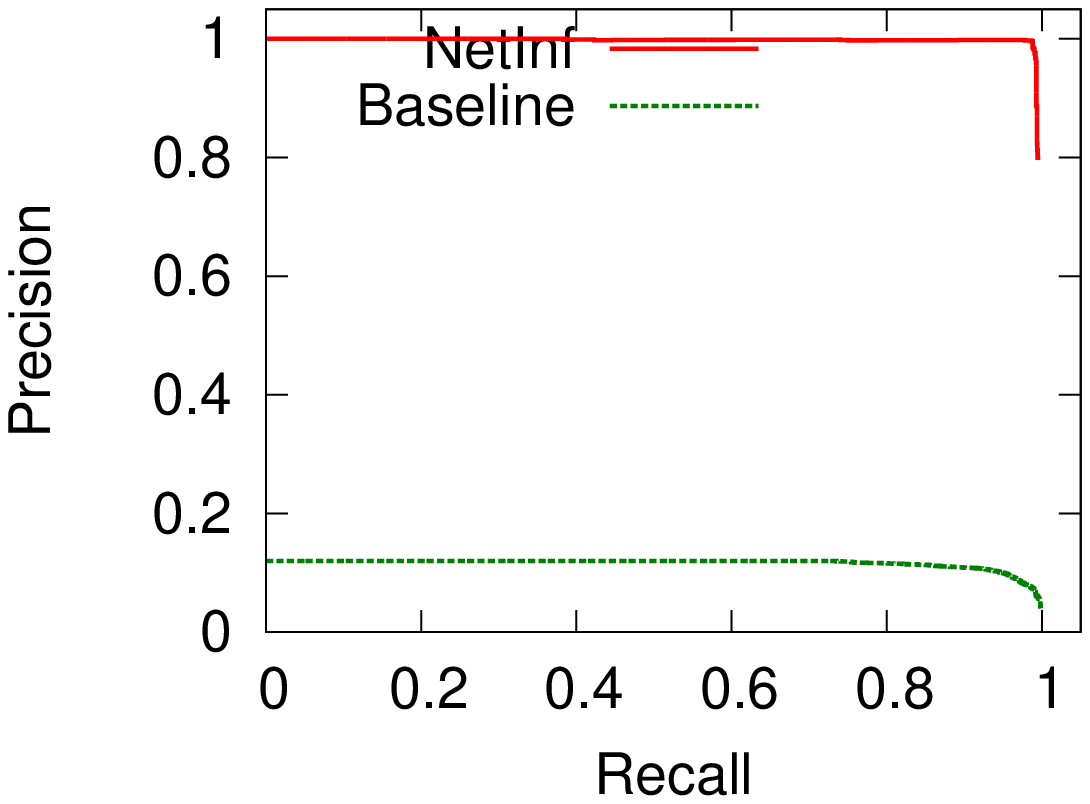}\label{fig:KROF1000NPLGraph}}\\
  \subfigure[Forest Fire (PL, $\alpha=1.1$)]{\includegraphics[width=0.40\textwidth]{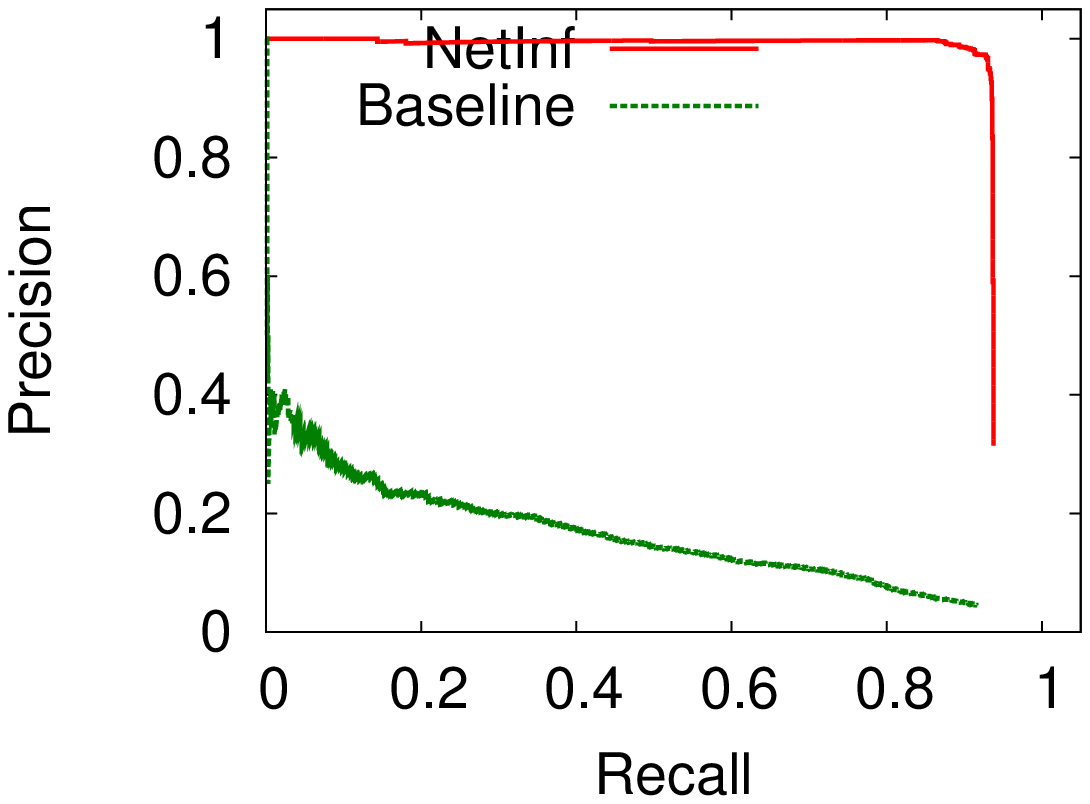}\label{fig:FF1000NGraph0}}
  \subfigure[Forest Fire (PL, $\alpha=3$)]{\includegraphics[width=0.40\textwidth]{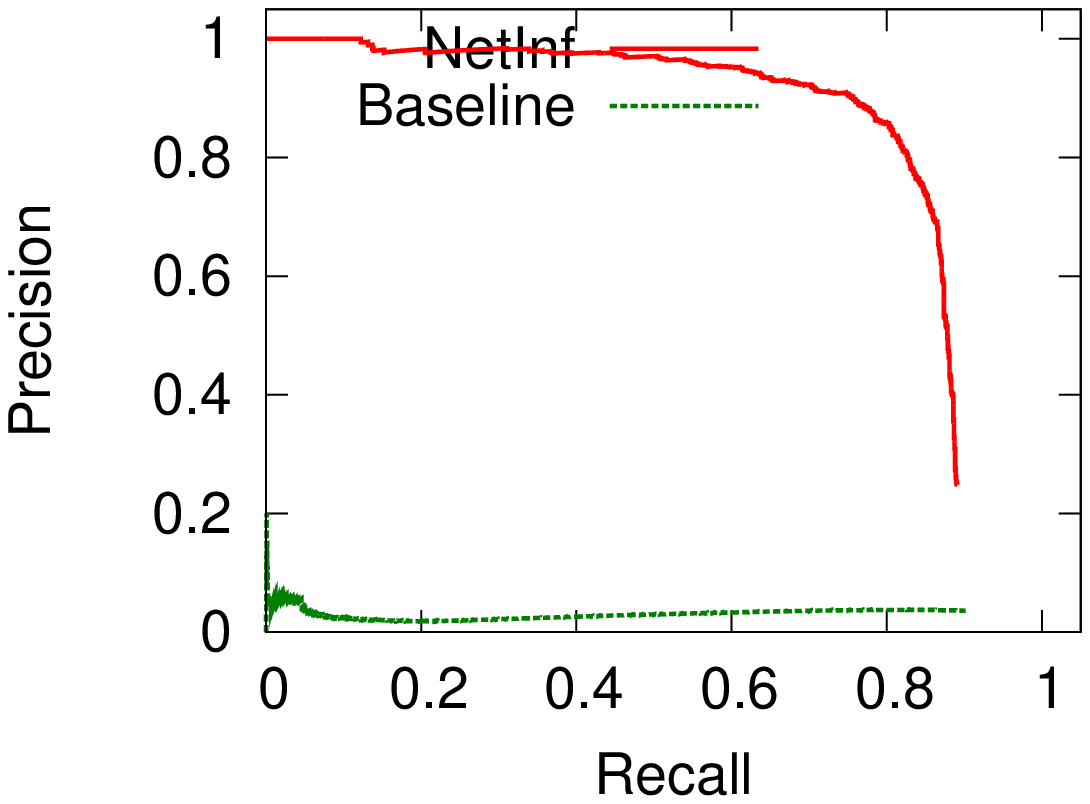}\label{fig:FF1000NGraph2}}
  \caption{Precision and recall for three 1024 node Kronecker and Forest Fire network networks
  with exponential (Exp) and power law (PL) incubation time model. The plots are generated by sweeping over values of $k$, that controls the 
  sparsity of the solution.
  } \label{fig:KRO1000NGraph}
\end{figure*}

Figure~\ref{fig:KRO1000NGraph} shows the precision-recall curves of \netinf and
the baseline method on three different Kronecker graphs (random, hierarchical
community structure and core-periphery structure) with 1024 nodes and two
incubation time models. The cascades were generated with an exponential
incubation time model with $\alpha=1$, or a power law incubation time model
with $\alpha=2$ and a value of $\beta$ low enough to avoid generating too large
cascades (in all cases, we pick a value of $\beta \in (0.1, 0.6)$). For each network 
we generated between 2,000 and 4,000 cascades so that 99\% of the edges of $G^*$ 
participated in at least one cascade. We chose cascade starting points uniformly 
at random.

First, we focus on Figures~\ref{fig:KROH1000NGraphExp},~\ref{fig:KROCP1000NGraphExp} 
and~\ref{fig:KROF1000NGraphExp} where we use the exponential incubation time model on 
different Kronecker graphs. Notice that the baseline method achieves the break-even 
point\footnote{The point at which recall is equal to precision.} between 0.4 and 0.5 
on all three networks. On the other hand, \netinf performs much better with the 
break-even point of 0.99 on all three datasets.

We view this as a particularly strong result as we were especially careful not
to generate too many cascades since more cascades mean more evidence that makes
the problem easier. Thus, using a very small number of cascades, where every
edge of $G^*$ participates in only a few cascades, we can almost perfectly
recover the underlying diffusion network $G^*$. Second important point to
notice is that the performance of \netinf seems to be strong regardless of the
structure of the network $G^*$. This means that \netinf works reliably
regardless of the particular structure of the network of which contagions
propagated (refer to Table~\ref{tab:NumCascades}).

Similarly, Figures~\ref{fig:KROH1000NPLGraph},~\ref{fig:KROCP1000NPLGraph}
and~\ref{fig:KROF1000NPLGraph} show the performance on the same three networks
but using the power law incubation time model. The performance of the baseline
now dramatically drops. This is likely due to the fact that the variance of
power-law (and heavy tailed distributions in general) is much larger than the
variance of an exponential distribution. Thus the diffusion network inference
problem is much harder in this case. As the baseline pays high price due to the
increase in variance with the break-even point dropping below $0.1$ the
performance of \netinf remains stable with the break even point in the high 90s.

\begin{figure}[t]
\centering
  \includegraphics[width=0.8\textwidth]{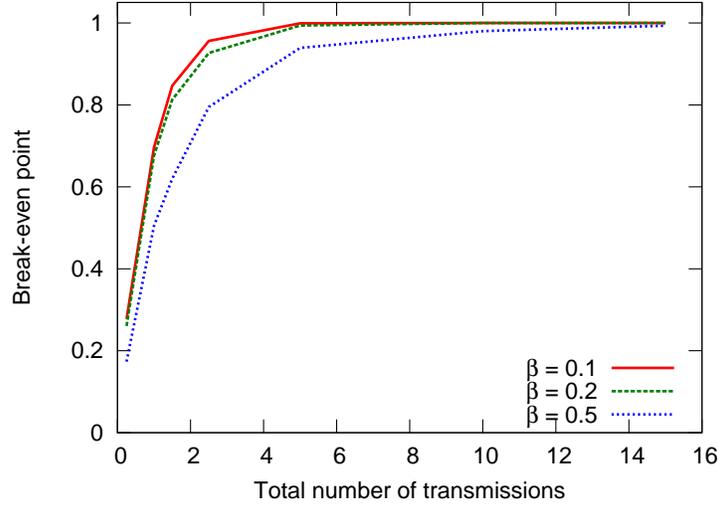}
  \caption{Performance of \netinf as a function of the amount of cascade data. The units in the x-axis are
  normalized. $x=1$ means that the total number of transmission events used for the experiment was equal 
  to the number of edges in $G^*$. On average \netinf requires about two propagation events per edge of the 
  original network in order to reliably recover the true network structure.}
  \label{fig:KRO1024CascVsPerformance}
\end{figure}

We also examine the results on the Forest Fire network (Figures~\ref{fig:FF1000NGraph0} 
and~\ref{fig:FF1000NGraph2}). Again, the performance of the baseline is very low while \netinf 
achieves the break-even point at around 0.90.

Generally, the performance on the Forest Fire network is a bit lower than on
the Kronecker graphs. However, it is important to note that while these
networks have very different global network structure (from hierarchical,
random, scale free to core periphery) the performance of \netinf is remarkably
stable and does not seem to depend on the structure of the network we are
trying to infer or the particular type of cascade incubation time model.

Finally, in all the experiments, we observe a sharp drop in precision for high values of 
recall (near $1$). This happens because the greedy algorithm starts to choose edges with 
low marginal gains that may be false edges, increasing the probability to make mistakes. 

\xhdr{Performance vs. cascade coverage} Intuitively, the larger the number of
cascades that spread over a particular edge the easier it is to identify it. On
one hand if the edge never transmitted then we can not identify it, and the
more times it participated in the transmission of a contagion the easier should
the edge be to identify.

In our experiments so far, we generated a relatively small number of cascades.
Next, we examine how the performance of \netinf depends on the amount of
available cascade data. This is important because in many real world situations
the data of only a few different cascades is available.

Figure~\ref{fig:KRO1024CascVsPerformance} plots the break-even point of \netinf
as a function of the available cascade data measured in the number of contagion
transmission events over all cascades. The total number of contagion
transmission events is simply the sum of cascade sizes. Thus, $x=1$ means that
the total number of transmission events used for the experiment was equal to
the number of edges in $G^*$. Notice that as the amount of cascade data
increases the performance of \netinf also increases. Overall we notice that
\netinf requires a total number of transmission events to be about 2 times the
number of edges in $G^*$ to successfully recover most of the edges of $G^*$.

Moreover, the plot shows the performance for different values of edge transmission
probability $\beta$. As noted before, big values of $\beta$ produce larger
cascades. Interestingly, when cascades are small (small $\beta$) \netinf needs
less data to infer the network than when cascades are larger. This occurs because 
the larger a cascade, the more difficult is to infer the parent of each node, since 
we have more potential parents for each the node to choose from.
For example, when $\beta=0.1$ \netinf needs about $2|E|$ transmission events, while 
when $\beta=0.5$ it needs twice as much data (about $4|E|$ transmissions) to obtain
the break even point of $0.9$.

\begin{figure}[t]
\centering
  \includegraphics[width=0.8\textwidth]{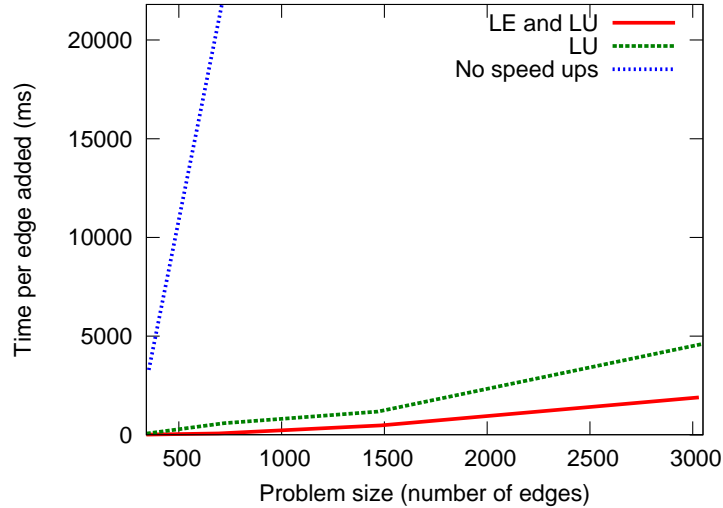}
  \caption{Average time per edge added by our algorithm implemented with lazy
  evaluation (LE) and localized update (LU).} \label{fig:KROHTime}
\end{figure}

\xhdr{Stopping criterion} In practice one does not know how long to run the
algorithm and how many edges to insert into the network $\hat{G}$. Given the
results from Figure~\ref{fig:Scores}, we found the following heuristic to give
good results. We run the \netinf algorithm for $\nedge$ steps where $\nedge$ is
chosen such that the objective function is ``close'' to the upper bound, \emph{i.e.},
$F_C(\hat{G}) > x \cdot \textrm{OPT}$, where OPT is obtained using the online
bound. In practice we use values of $x$ in range $0.8$--$0.9$. 
That means that in each iteration $k$, OPT is computed by evaluating the right hand side 
expression of the equation in Theorem~\ref{th:online_bound}, where $k$ is simply the iteration number. 
Therefore, OPT is computed online, and thus the stopping condition is also updated online. 

\xhdr{Scalability} Figure~\ref{fig:KROHTime} shows the average computation time
per edge added for the \netinf algorithm implemented with lazy evaluation and
localized update. We use a hierarchical Kronecker network and an exponential
incubation time model with $\alpha=1$ and $\beta=0.5$. Localized update speeds
up the algorithm for an order of magnitude (45$\times$) and lazy evaluation
further gives a factor of 6 improvement. Thus, overall, we achieve  two orders
of magnitude speed up (280$\times$), without {\em any} loss in solution
quality.

In practice the \netinf algorithm can easily be used to infer networks of
10,000 nodes in a matter of hours.

\begin{figure}[t]
  \centering
  \includegraphics[width=0.8\textwidth]{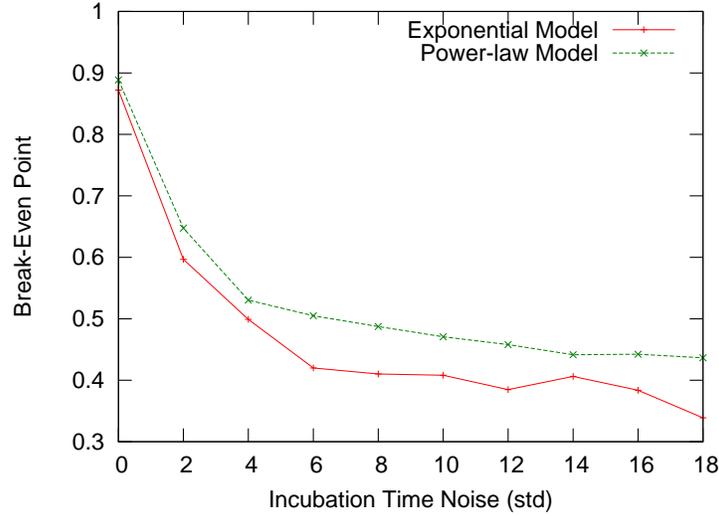}
  \caption{Break-even point of \netinf as a function of the amount of additive Gaussian noise in the incubation time.} \label{fig:NoiseWaitingTimeVsPerformance}
\end{figure}

\xhdr{Performance vs. incubation time noise} In our experiments so far, we have
assumed that the incubation time values between infections are not \emph{noisy}
and that we have access to the true distribution from which the incubation
times are drawn. However, real data may violate any of these two assumptions.

We study the performance of \netinf (break-even point) as a function of the
noise of the waiting time between infections. Thus, we add Gaussian noise to
the waiting times between infections in the cascade generation process.

Figure~\ref{fig:NoiseWaitingTimeVsPerformance} plots the performance of \netinf
(break-even point) as a function of the amount of Gaussian noise added to the
incubation times between infections for both an exponential incubation time
model with $\alpha=1$, and a power law incubation time model with $\alpha=2$.
The break-even point degrades with noise but once a high value of noise is
reached, an additional increment in the amount of noise does not degrade
further the performance of \netinf. Interestingly, the break-even point value
for high values of noise is very similar to the break-even point achieved later
in a real dataset (Figures~\ref{fig:HyperlinksGraphN500E4000C180000}
and~\ref{fig:HyperlinksMemeTrackerGraphN500C70000}).

\xhdr{Performance vs. infections by the external source} In all our experiments
so far, we have assumed that we have access to \emph{complete} cascade data,
\emph{i.e.}, we are able to observe all the nodes taking part in each cascade.
Thereby, except for the first node of a cascade, we do not have any ``jumps''
or missing nodes in the cascade as it spreads across the network. Even though
techniques for coping with missing data in information cascades have recently
been investigated~\cite{sadikov11cascades}, we evaluate \netinf against both
scenarios.

\begin{figure}[!t]
  \centering
  \subfigure[Missing node infection data]{\includegraphics[width=0.8\textwidth]{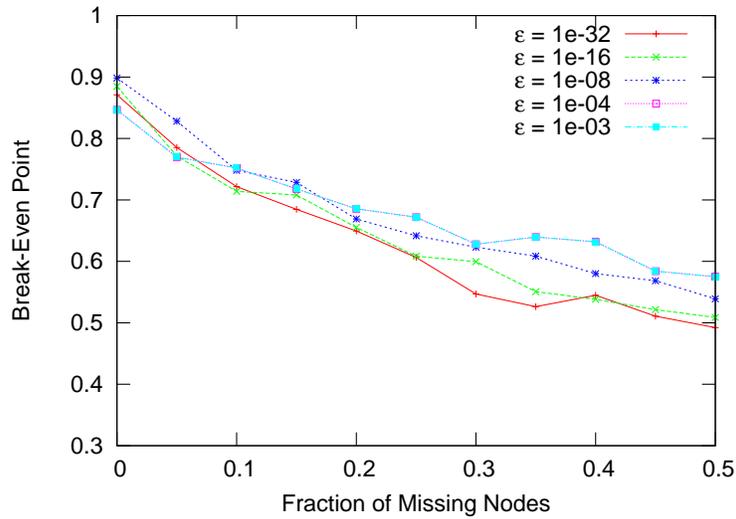} \label{fig:KroneckerMissingVsPerformance}} \\
  \subfigure[Node infections due to external source]{\includegraphics[width=0.8\textwidth]{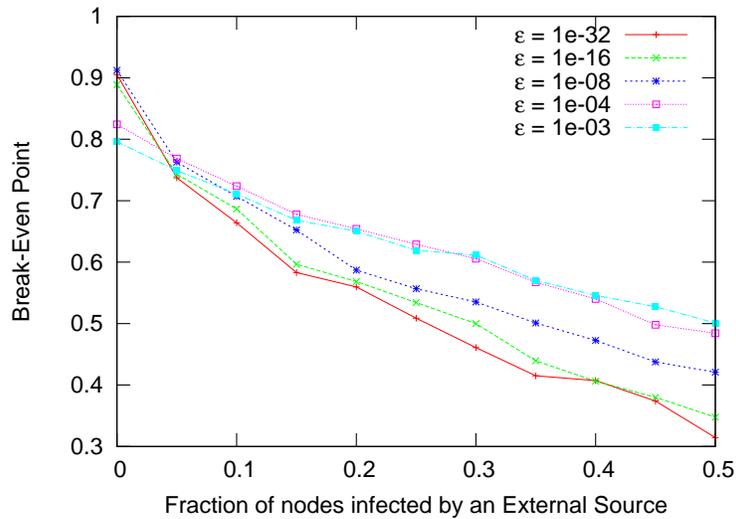} \label{fig:KroneckerExternalVsPerformance}}
  \caption{Break-even point of \netinf as (a) function of the fraction of missing nodes per cascade,
  and as (b) function of the fraction of nodes that are infected by an external source per cascade.}
  \label{fig:MissingExternalSourceVsPerformance}
\end{figure}

First, we consider the case where a random fraction of each cascade is missing. This
means that we first generate a set of cascades, but then only record node
infection times of $f$-fraction of nodes. We first generate enough cascades so
that without counting the missing nodes in the cascades, we still have that
99\% of the edges in $G^*$ participate in at least one cascade. Then we
randomly delete (\ie, set infection times to infinity) $f$-fraction of nodes in
each cascade.

Figure~\ref{fig:KroneckerMissingVsPerformance} plots the performance of \netinf
(break-even point) as a function of the percentage of missing nodes in each
cascade. Naturally, the performance drops with the amount of missing data.
However, we also note that the effect of missing nodes can be mitigated by an
appropriate choice of the parameter $\varepsilon$. Basically, higher $\varepsilon$
makes propagation via $\varepsilon$-edges more likely and thus by giving a
cascade a greater chance to propagate over the $\varepsilon$-edges \netinf can
implicitly account for the missing data.

Second, we also consider the case where the contagion does not spread through
the network via diffusion but rather due to the influence of an external
source. Thus, the contagion does not really spread over the edges of the
network but rather appears almost at random at various nodes of the network.

Figure~\ref{fig:KroneckerExternalVsPerformance} plots the performance of
\netinf (break-even point) as a function of the percentage of nodes that are
infected by an external source for different values of $\varepsilon$. In our
framework, we model the influence due to the external source with the
$\varepsilon$-edges. Note that appropriately setting $\varepsilon$ can
appropriately account for the exogenous infections that are not the result of
the network diffusion but due to the external influence. The higher the value
of $\varepsilon$, the stronger the influence of the external source,
\emph{i.e.}, we assume a greater number of missing nodes or number of nodes
that are infected by an external source. Thus, the break-even is more robust
for higher values of $\varepsilon$.
\begin{figure*}[t]
  \centering
  \includegraphics[width=0.6\textwidth]{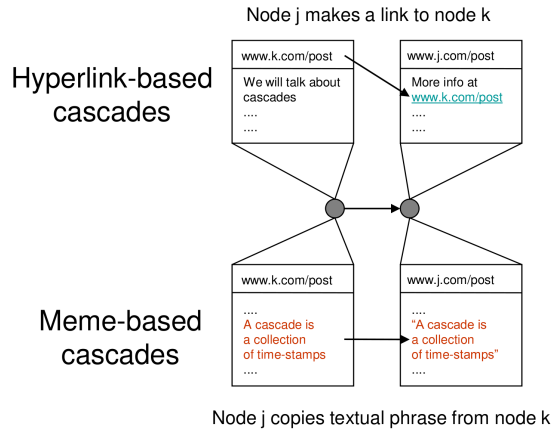}
  \caption{Hyperlink-based cascades versus meme-based cascades. In hyper-link cascades, if post $j$ linked to post $k$,
  we consider this as a contagion transmission event with the post creation time as the corresponding infection time.
  In MemeTracker cascades, we follow the spread of a short textual phrase and use post creation times as infection times.}
  \label{fig:realcascades}
\end{figure*}

\subsection{Experiments on real data}


\xhdr{Dataset description} We use more than $172$ million news articles and
blog posts from $1$ million online sources over a period of one year from
September 1 2008 till August 31 2009\footnote{Data available at
\url{http://memetracker.org} and \url{http://snap.stanford.edu/netinf}}. Based
on this raw data, we use two different methodologies to trace information on
the Web and then create two different datasets:

\noindent{\em (1) Blog hyperlink cascades dataset:} We use hyperlinks between blog 
posts to trace the flow of information~\cite{jure07cascades}. When a blog publishes a piece 
of information and uses hyper-links to refer to other posts published by other blogs we
consider this as events of information transmission. A cascade $c$ starts when
a blog publishes a post $P$ and the information propagates recursively to other
blogs by them linking to the original post or one of the other posts from which we
can trace a chain of hyperlinks all the way to the original post $P$ . By
following the chains of hyperlinks in the reverse direction we identify
hyperlink cascades~\cite{jure07cascades}. A cascade is thus composed of the
time-stamps of the hyperlink/post creation times.

\noindent{\em (1) MemeTracker dataset:} We use the MemeTracker~\cite{leskovec2009kdd} 
methodology to extract more than 343 million short textual phrases (like, ``Joe, the plumber''
or ``lipstick on a pig''). Out of these, 8 million distinct phrases appeared
more than 10 times, with the cumulative number of mentions of over 150 million.
We cluster the phrases to aggregate different textual variants of the same
phrase~\cite{leskovec2009kdd}. We then consider each phrase cluster as a
separate cascade $c$. Since all documents are time stamped, a cascade $c$ is
simply a set of time-stamps when blogs first mentioned phrase $c$. So, we
observe the times when blogs mention particular phrases but not where they
copied or obtained the phrases from. We consider the largest 5,000 cascades (phrase 
clusters) and for each website we record the time when they first mention a phrase in the particular
phrase cluster. Note that cascades in general do not spread over all the sites, which our 
methodology can successfully handle.

Figure~\ref{fig:realcascades} further illustrates the concept of hyper-link and MemeTracker 
cascades.

\begin{figure}[t]
  \centering
   \subfigure[Blog hyperlink cascades dataset]{\includegraphics[width=0.48\textwidth]{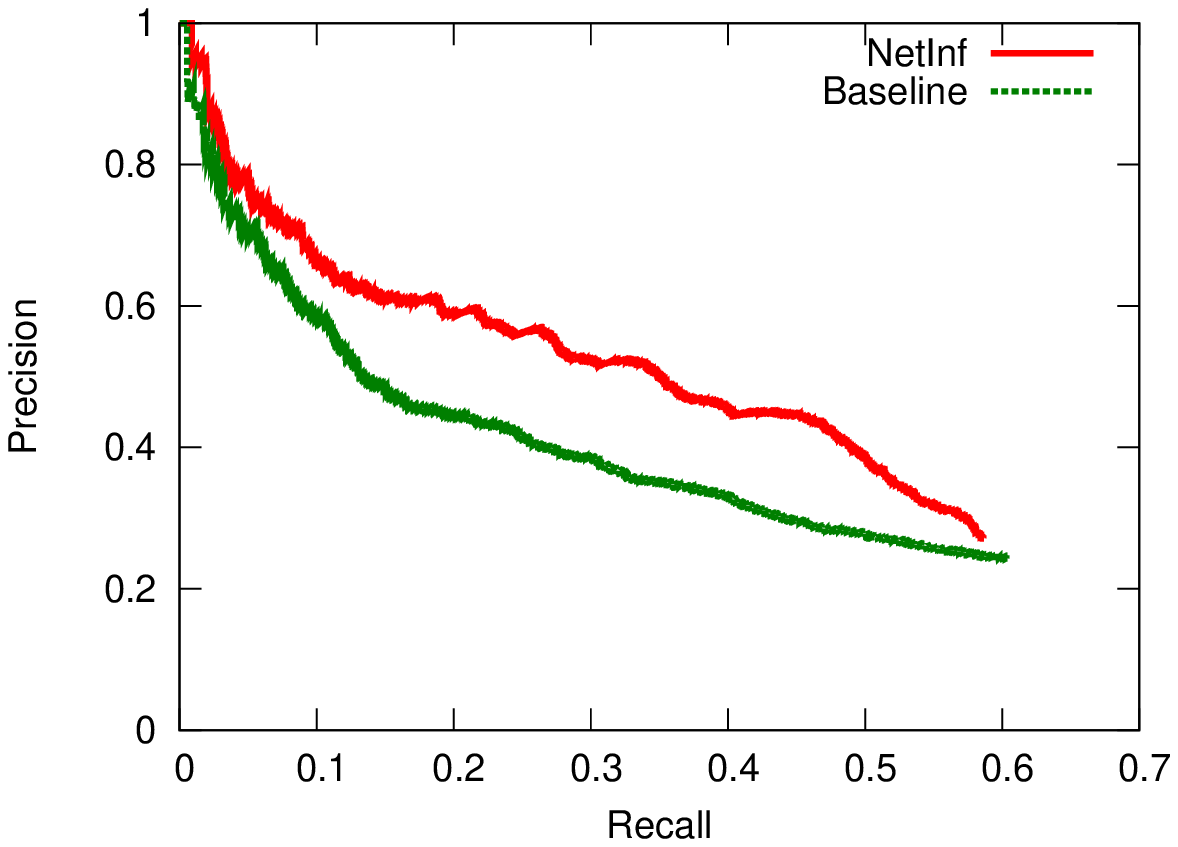}\label{fig:HyperlinksGraphN500E4000C180000}}
   \subfigure[MemeTracker dataset]{\includegraphics[width=0.48\textwidth]{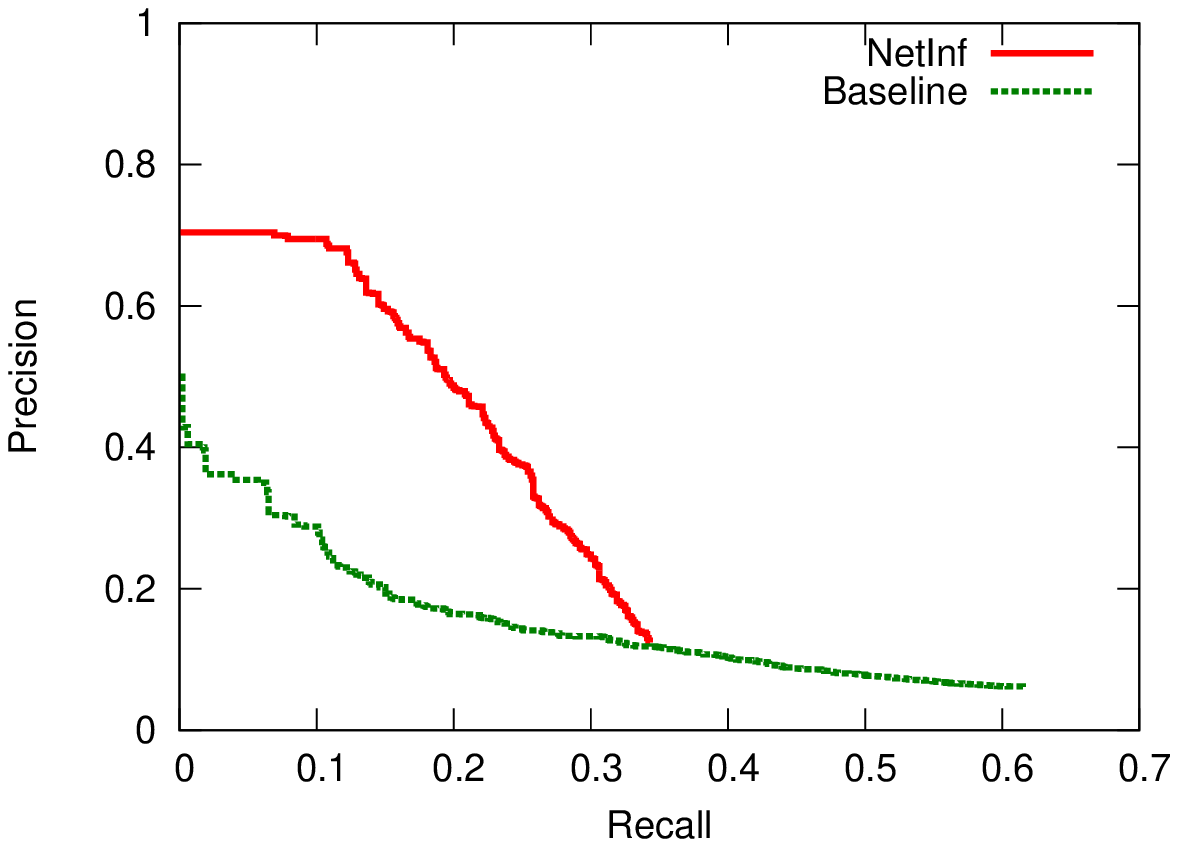}\label{fig:HyperlinksMemeTrackerGraphN500C70000}}
  \caption{Precision and recall for a 500 node hyperlink network using (a) the blog hyperlink cascades dataset (\ie, hyperlinks cascades) and (b) the MemeTracker dataset (\ie, MemeTracker cascades). We 
  used $\beta = 0.5$, $\varepsilon = 10^{-9}$ and the exponential model with $\alpha=1.0$. The time units were hours.}
\end{figure}

\xhdr{Accuracy on real data} As there is not ground truth network
for both datasets, we use the following way to create the ground
truth network $G^*$. We create a network where there is a directed edge $(u,v)$
between a pair of nodes $u$ and $v$ if a post on site $u$ linked to a post on
site $v$. To construct the network we take the top 500 sites in terms of number
of hyperlinks they create/receive. We represent each site as a node in $G^*$
and connect a pair of nodes if a post in first site linked to a post in the
second site. This process produces a ground truth network $G^*$ with 500 
nodes and 4,000 edges.

First, we use the blog hyperlink cascades dataset to infer the network $\hat{G}$ and evaluate 
how many edges $\netinf$ got right. Figure~\ref{fig:HyperlinksGraphN500E4000C180000} 
shows the performance of \netinf and the baseline. Notice that the baseline method
achieves the break-even point of 0.34, while our method performs better with a
break-even point of 0.44, almost a 30\% improvement.

\begin{figure*}[t]
  \centering
  \includegraphics[width=0.8\textwidth]{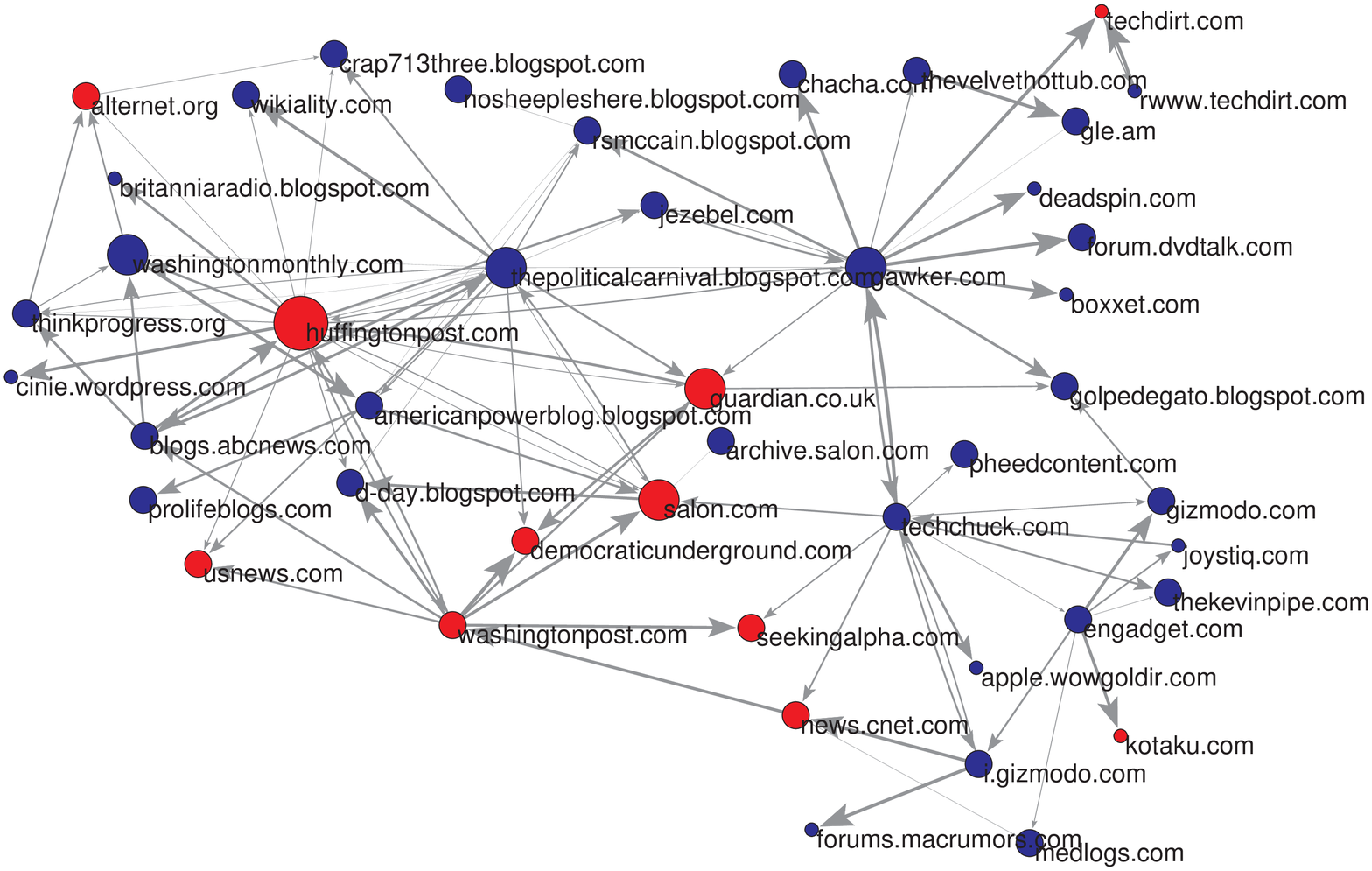}
  \caption{Small part of a news media (red) and blog (blue) diffusion network. We
  use the blog hyperlink cascades dataset, \ie, hyperlinks between blog and news media 
  posts to trace the flow of information.} \label{fig:RealNetworkBlogs}
\end{figure*}

\netinf is basically performing a link-prediction task based only on temporal linking information. The 
assumption in this experiment is that sites prefer to create links to sites that recently
mentioned information while completely ignoring the authority of the site.
Given such assumption is not satisfied in real-life, we consider the break even
point of 0.44 a good result.

Now, we consider an even harder problem, where we use the Memetracker dataset to infer 
$G^*$. In this experiment, we only observe times when sites mention particular textual phrases 
and the task is to infer the hyperlink structure of the underlying web graph.
Figure~\ref{fig:HyperlinksMemeTrackerGraphN500C70000} shows the performance of
\netinf and the baseline. The baseline method has a break-even point of 0.17
and \netinf achieves a break-even point of 0.28, more than a 50\% improvement

To have a fair comparison with the synthetic cases, notice that the exponential
incubation time model is a simplistic assumption for our real dataset, and
\netinf can potentially gain additional accuracy by choosing a more
realistic incubation time model.

\xhdr{Solution quality} Similarly as with synthetic data, in
Figure~\ref{fig:ScoreMemeTracker5000QOTOP1000EXPUNIF01NOTLONG100YEAR} we
investigate the value of the objective function and compare it to the online
bound. Notice that the bound is almost as tight as in the case of synthetic
networks, finding the solution that is least 84\% of optimal and both curves
are similar in shape to the synthetic case value. Again, as in the synthetic
case, the value of the objective function quickly flattens out which means that
one needs a relatively few number of edges to capture most of the information
flow on the Web.

\begin{figure*}[t]
\centering
\includegraphics[width=\textwidth]{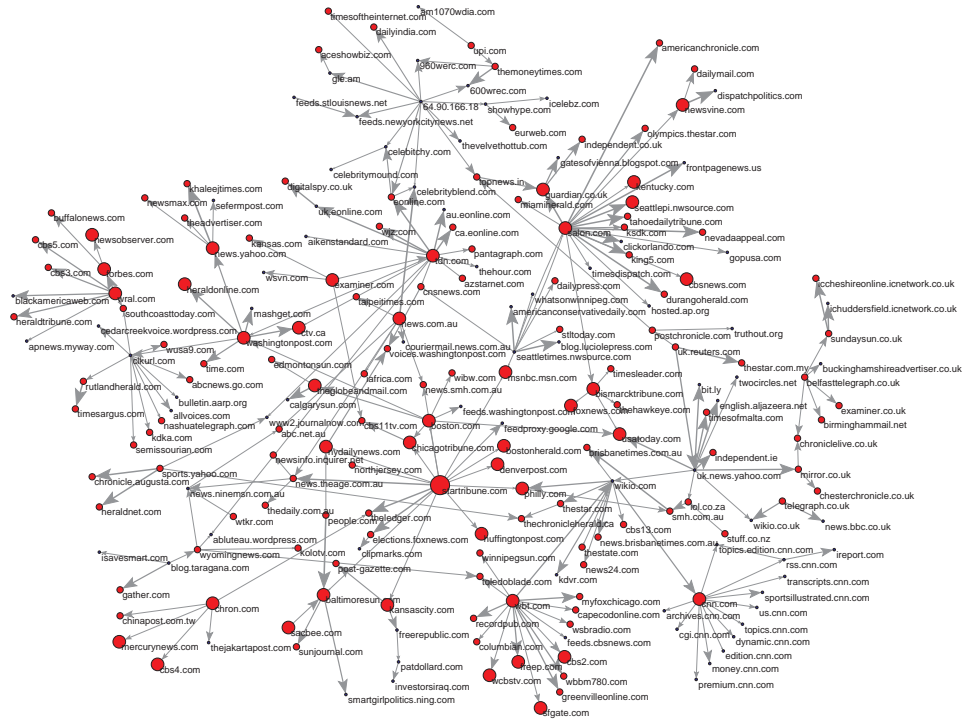}
\caption{Small part of a news media (red) and blog (blue) diffusion network. We use the MemeTracker dataset, \ie, textual 
phrases from MemeTracker to trace the flow of information.} \label{fig:RealNetworkBlogsMemes}
\end{figure*}

In the remainder of the section, we use the top 1,000 media sites and blogs with the largest number of documents.

\xhdr{Visualization of diffusion networks}
We examine the structure of the inferred diffusion networks using both datasets: the
blog hyperlink cascades dataset and the MemeTracker dataset.

Figure~\ref{fig:RealNetworkBlogs} shows the largest connected component of the
diffusion network after $100$ edges have been chosen using the first
dataset, \emph{i.e.}, using hyperlinks to track the flow of information.
The size of the nodes is proportional to the number of articles on the site and
the width of the edge is proportional to the probability of influence,
\emph{i.e.}, stronger edges have higher width. The strength of an edge across all 
cascades is simply defined as the marginal gain given by adding the edge in the 
greedy algorithm (and this is proportional to the probability of influence). Since news 
media articles rarely use hyperlinks to refer to one another, the network is somewhat 
biased towards web blogs (blue nodes). There are several interesting patterns to 
observe.

First, notice that three main clusters emerge: on the left side of the network
we can see blogs and news media sites related to politics, at the right top, we
have blogs devoted to gossip, celebrity news or entertainment and on the right
bottom, we can distinguish blogs and news media sites that deal with
technological news. As Huffington Post and Political Carnival play the central
role on the political side of the network, mainstream media sites like
Washington Post, Guardian and the professional blog Salon.com play the role of
connectors between the different parts of the network. The celebrity gossip
part of the network is dominated by the blog Gawker and technology news gather
around blogs Gizmodo and Engadget, with CNet and TechChuck establishing the
connection to the rest of the network. 

Figure~\ref{fig:RealNetworkBlogsMemes} shows the largest connected component of 
the diffusion network after $300$ edges have been chosen using the second methodology, 
\emph{i.e.} using short textual phrases to track the flow of information. In this case, the network 
is biased towards news media sites due to its higher volume of information.

\xhdr{Insights into the diffusion on the web} The inferred diffusion networks
also allow for analysis of the global structure of information propagation on
the Web. For this analysis, we use the MemeTracker dataset and analyze the structure 
of the inferred information diffusion network.

\begin{figure*}[t]
  \centering
  \subfigure[Influence Index]{\includegraphics[width=0.48\textwidth]{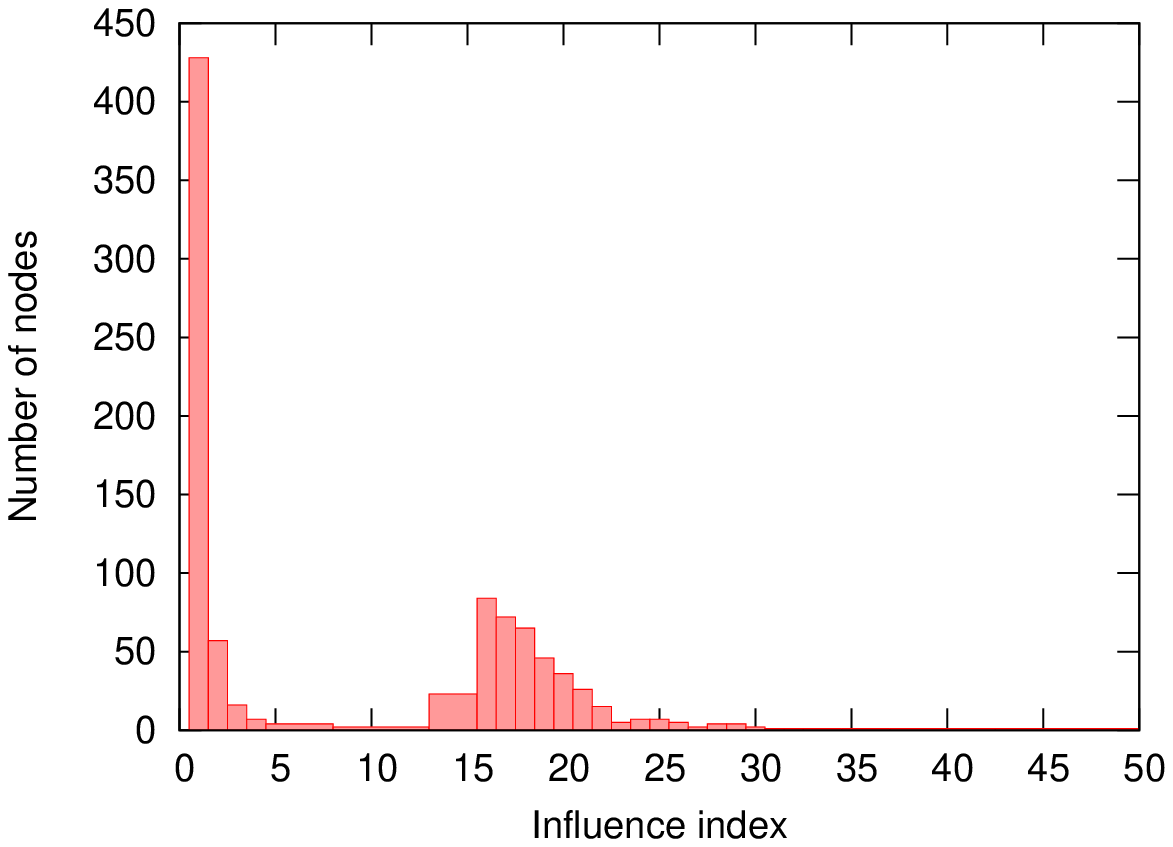} \label{fig:InfluenceIndex5000QOTOP1000EXPUNIF01NOTLONG100YEARP}}
  \subfigure[Number of edges as iterations proceed]{\includegraphics[width=0.48\textwidth]{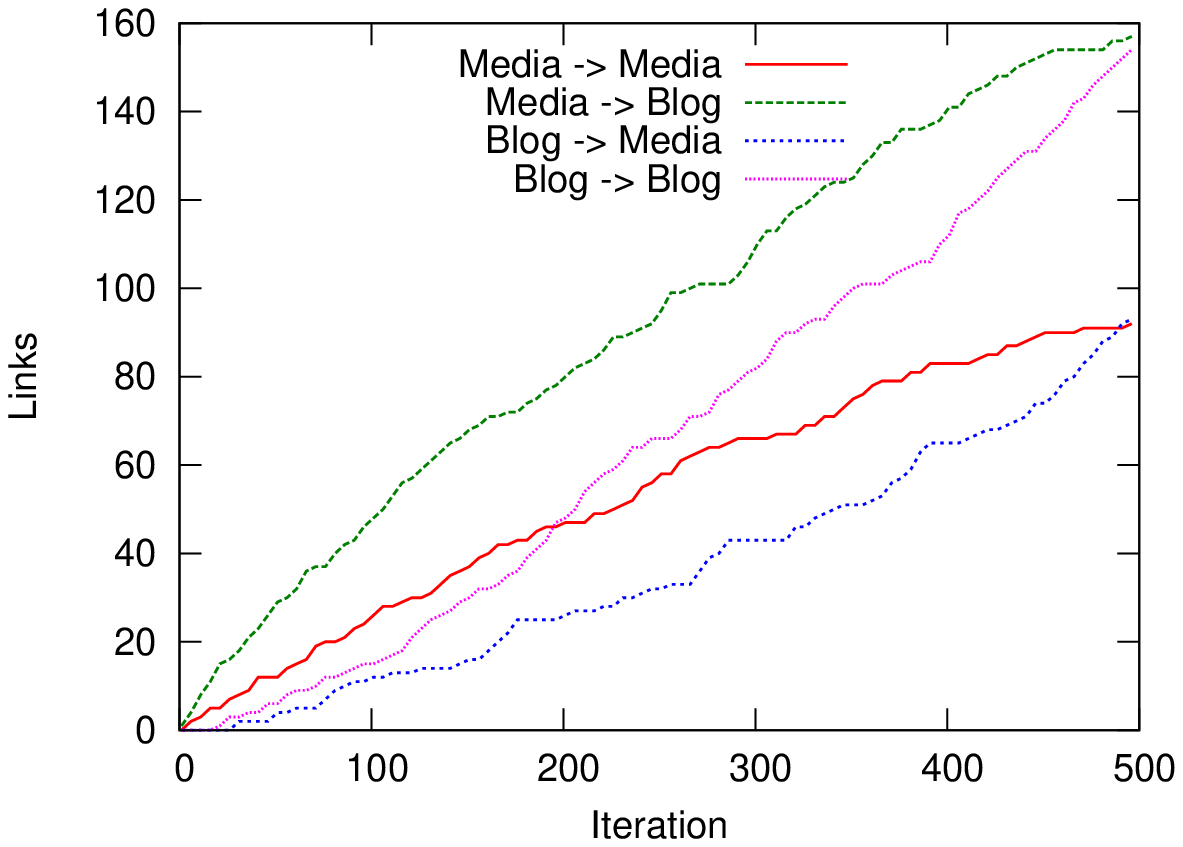} \label{fig:links}} \\
  \subfigure[Median edge time lag]{\includegraphics[width=0.8\textwidth]{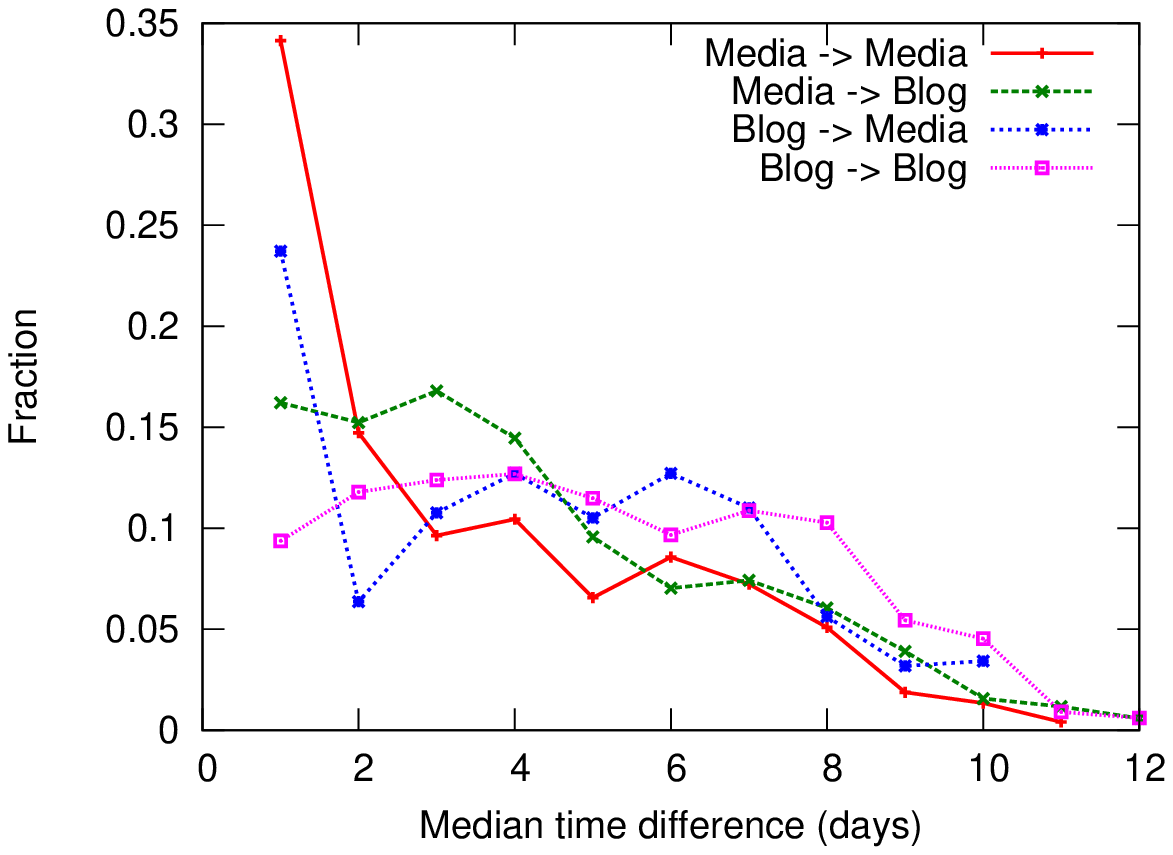} \label{fig:MedianTimeQ5000}}
  \caption{(a) Distribution of node influence index. Most nodes have very low influence (they act as sinks).
  (b) Number and strength of edges between different media types. Edges of news media influencing blogs are the strongest.
  (c) Median time lag on edges of different type.}
\end{figure*}

First, Figure~\ref{fig:InfluenceIndex5000QOTOP1000EXPUNIF01NOTLONG100YEARP}
shows the distribution of the influence index. The influence index is defined as the number 
of reachable nodes from $w$ by traversing edges of the inferred diffusion network
(while respecting edge directions). Nevertheless, we are also interested in the
distance from $w$ to its reachable nodes, i.e. nodes at shorter distances are
more likely to be infected by $w$. Thus, we slightly modify the definition of
influence index to be $\sum_{u} 1/d_{wu}$ where we sum over all the reachable
nodes from $w$ and $d_{wu}$ is the distance between $w$ and $u$.
Notice that we have two types of nodes. There is a small set of nodes that can reach many 
other nodes, which means they either directly or indirectly propagate information to them. On 
the other side we have a large number of sites that only get influenced but do not influence 
many other sites. This hints at a core periphery structure of the diffusion network
with a small set of sites directly or indirectly spreading the information in the rest of the 
network.

Figure~\ref{fig:links} investigates the number of links in the inferred network
that point between different types of sites. Here we split the sites into
mainstream media and blogs. Notice that most of the links point from news media
to blogs, which says that most of the time information propagates from the
mainstream media to blogs. Then notice how at first many media-to-media links are
chosen but in later iterations the increase of these links starts to slow down.
This means that media-to-media links tend to be the strongest and \netinf picks
them early. The opposite seems to occur in case of blog-to-blog links where
relatively few are chosen first but later the algorithm picks more of them.
Lastly, links capturing the influence of blogs on mainstream media are the
rarest and weakest. This suggests that most information travels from mass media
to blogs.

Last, Figure~\ref{fig:MedianTimeQ5000} shows the median time difference between
mentions of different types of sites. For every edge of the inferred diffusion
network, we compute the median time needed for the information to spread from
the source to the destination node. Again, we distinguish the mainstream media
sites and blogs. Notice that media sites are quick to infect one another or
even to get infected from blogs. However, blogs tend to be much slower in
propagating information. It takes a relatively long time for them to get
``infected'' with information regardless whether the information comes from the
mainstream media or the blogosphere.

Finally, we have observed that the insights into diffusion on the web using the 
inferred network are very similar to insights obtained by simply taking the hyperlink network. However, 
our aim here is to show that (i) although the quantitative results are modest in terms of precision and recall, 
the qualitative insights makes sense, and that (ii) it is surprising that using simply timestamps of links, we 
are able to draw the same qualitative insights as using the hyperlink network

\section{Further related work}
\label{sec:related}
There are several lines of work we build upon. Although the information diffusion in on-line settings has received considerable
attention~\cite{gruhl2004information,kumar2004structure,adar05epidemics,jure06viral,jure06influence,jure07cascades,nowell08letter},
only a few studies were able to study the actual shapes of cascades~\cite{jure07cascades,nowell08letter,ghosh2011framework,romero2011differences,ver2011stops}. 
The problem of inferring links of diffusion was first studied by Adar and Adamic~\cite{adar05epidemics}, who formulated it as a supervised classification 
problem and used Support Vector Machines combined with rich textual features to predict the occurrence of individual links. Although rich textual features 
are used, links are predicted independently and thus their approach is similar to our baseline method in the sense that it picks a threshold 
(\emph{i.e.}, hyperplane in case of SVMs) and predicts in\-di\-vi\-dua\-lly the most probable links.

The work most closely related to our approach, \connie~\cite{meyers10netinf} and \netrate~\cite{manuel11icml}, also uses a generative probabilistic model for the 
problem of inferring a latent social network from diffusion (cascades) data. However, \connie and \netrate use convex programming to solve the network inference 
problem. \connie includes a $l_1$-like penalty term that controls sparsity while \netrate provides a unique sparse solution by allowing different transmission rates 
across edges. For each edge $(i, j)$, \connie infers a prior probability $\beta_{i,j}$ and \netrate infers a transmission rate $\alpha_{i,j}$. Both algorithms are 
computationally more expensive than \netinf.
In our work, we assume that all edges of the network have the same prior probability ($\beta$) and transmission rate ($\alpha$). From this point of view, we think the 
comparison between the algorithms is unfair since \netrate and \connie have more degrees of freedom

Network structure learning has been considered for estimating the dependency structure of probabilistic graphical models~\cite{friedman2003being,friedman1999learning}.
However, there are fundamental differences between our approach and graphical models structure learning. 
(a) we learning directed networks, but Bayes netws are DAGs
(b) undirected graphical model structure learning makes no assumption
about the network but they learn undirected and we learn directed
networks

First, our work makes no assumption about the network structure (we allow cycles, reciprocal edges) and are thus able to learn general directed networks. In directed graphical models, 
reciprocal edges and cycles are not allowed, and the inferred network is a directed acyclic graph (DAG). In undirected graphical models, there are typically no assumptions about 
the network structure, but the inferred network is undirected. Second, Bayesian network structure inference methods are generally heuristic approaches without any approximation guarantees.
Network structure learning has also been used for estimating epidemiological networks~\cite{wallinga04epidemic} and for estimating probabilistic relational 
models~\cite{getoor2003learning}. In both cases, the problem is formulated in a probabilistic framework. However, since the problem is intractable, heuristic greedy 
hill-climbing or stochastic search that offer no performance guarantee were usually used in practice. 
In contrast, our work provides a novel formulation and a {\em tractable} solution together with an approximation guarantee.

Our work relates to static sparse graph estimation using graphical Lasso methods~\cite{wainwright06graphical,schmidt2007learning,friedman08lasso,meinshausen2006high},
unsupervised structure network inference using kernel methods~\cite{lippert2008kernel}, mutual information relevance network inference~\cite{butte2000mutual}, inference of influence 
probabilities~\cite{goyal2010learning}, and extensions to time evolving graphical models~\cite{ahmed2009recovering,ghahramani1998learning,song09timevarying}. Our work 
is also related to a link prediction problem~\cite{janse03linkpred,taskar03linkpred,libennowell03linkpred,backstrom11srw,vert2005supervised} but different in a sense that this line 
of work assumes that part of the network is already visible to us.

Last, although \emph{submodular} function maximization has been previously considered for sensor placement~\cite{leskovec2007cost} and finding influencers
in viral marketing~\cite{kempe03maximizing}, to the best of our knowledge, the present work is the first that considers submodular function maximization in
the context of network structure learning.

\section{Conclusions}
\label{sec:conclusions}
We have investigated the problem of tracing paths of diffusion and influence.
We formalized the problem and developed a scalable algorithm, \netinf, to infer
networks of influence and diffusion.  First, we defined a generative model of
cascades and showed that choosing the best set of $\nedge$ edges maximizing the
likelihood of the data is NP-hard. By exploiting the submodularity of our
objective function, we developed \netinf, an efficient algorithm for inferring
a near-optimal set of $\nedge$ directed edges. By exploiting localized updates
and lazy evaluation, our algorithm is able to scale to very large real data
sets.

We evaluated our algorithm on synthetic cascades sampled from our generative
model, and showed that \netinf is able to accurately recover the underlying
network from a relatively small number of samples.  In our experiments, \netinf
drastically outperformed a naive maximum weight baseline heuristic.

Most importantly, our algorithm allows us to study properties of real networks.
We evaluated \netinf on a large real data set of memes propagating across news
websites and blogs. We found that the inferred network exhibits a
core-periphery structure with mass media influencing most of the blogosphere.
Clusters of sites related to similar topics emerge (politics, gossip,
technology, etc.), and a few sites with social capital interconnect these
clusters allowing a potential diffusion of information among sites in different
clusters.

There are several interesting directions for future work. Here we only used
time difference to infer edges and thus it would be interesting to utilize more
informative features (e.g., textual content of postings etc.) to more
accurately estimate the influence probabilities. Moreover, our  work considers
static propagation networks, however real influence networks are dynamic and
thus it would be interesting to relax this assumption. Last, there are many
other domains where our methodology could be useful: inferring interaction
networks in systems biology (protein-protein and gene interaction networks),
neuroscience (inferring physical connections between neurons) and epidemiology.

We believe that our results provide a promising step towards understanding
complex processes on networks based on partial observations.

\subsection*{Acknowledgments}
We thank Spinn3r for resources that facilitated the research. The research was
supported in part by Albert Yu \& Mary Bechmann Foundation, IBM, Lightspeed,
Microsoft, Yahoo, grants ONR N00014-09-1-1044, NSF CNS0932392, NSF CNS1010921,
NSF IIS1016909, NSF IIS0953413, AFRL FA8650-10-C-7058 and Okawa Foundation
Research Grant. Ma\-nu\-el Go\-mez Ro\-dri\-guez has been supported in part by a
Fundacion Caja Madrid Graduate Fe\-llow\-ship, a Fundacion Barrie de la Maza
Graduate Fe\-llow\-ship and by the Max Planck So\-ciety.

\bibliographystyle{acmtrans}
\bibliography{refs}

\end{document}